\newtheorem{theorem}{Theorem}[section]
\newtheorem{lemma}{Lemma}[section]
\newtheorem{corollary}[theorem]{Corollary}
\newtheorem{example}{Example}[section]
\newtheorem{remark}{Remark}[section]
\newtheorem{definition}{Definition}[section]
\definecolor{ltblue}{rgb}{0,0.4,0.4}
\definecolor{dkblue}{rgb}{0,0.1,0.6}
\definecolor{dkgreen}{rgb}{0,0.35,0}
\definecolor{dkviolet}{rgb}{0.3,0,0.5}
\definecolor{dkred}{rgb}{0.5,0,0}
\tiny\color[gray]{0.3},
\NewDocumentCommand{\optionalParens}{s m m}{
    \IfBooleanTF{#2}{\left(#3\right)}{\IfBooleanTF{#1}{~#3}{#3}}
}
\NewDocumentCommand{\apply}{m s O{} m}{
    #1#3 \optionalParens*{#2}{#4}
}
\NewDocumentCommand{\applytwo}{m O{} s m s m}{ 
   #1#2~\optionalParens{#3}{#4}~\optionalParens{#5}{#6}
}
  \def\({}%
  \def\){}%
  \def\Zd{Zd}
  \def\lambdaP{lambda-Pc}
  \def\qudits{qudits}
  \def\PCat{Pc}
  \def\psiof#1{psi(#1)}%
  \def\Delta{Delta}
  \def\cprod{star}
\newcommand{\eg}{\emph{e.g.,}\xspace}
\newcommand{\ie}{\emph{i.e.,}\xspace}
\newcommand{\lolli}{\multimap}
\newcommand{\ZZ}{\mathbb{Z}}
\NewDocumentCommand{\Zd}{st'}{\IfBooleanTF{#1}{\tfrac12\mathbb{Z}_{d'}}
                               { \IfBooleanTF{#2}{\mathbb{Z}_{d'}}
                               { \mathbb{Z}_d
                               }}}
\RenewDocumentCommand{\ZZ}{st'}{\IfBooleanTF{#1}{\tfrac12\mathbb{Z}_{4}}
                               { \IfBooleanTF{#2}{\mathbb{Z}_{4}}
                               { \mathbb{Z}_2
                               }}}
\NewDocumentCommand{\pcl}{s}{\IfBooleanTF{#1}{\mathit{PCl}}{\mathbf{PCl}}}
\NewDocumentCommand{\pauligroup}{s}{\IfBooleanTF{#1}{\mathit{P}}{\mathbf{P}}}
\newcommand{\spgroup}{\mathrm{Sp}}
\newcommand{\cat}[1]{\mathcal{#1}}
\newcommand{\PCat}{\cat{P}_c}
\newcommand{\lambdaP}{\lambda^{\PCat}}
\renewcommand{\textsf}{\text}
\newcommand{\cprod}{\star}
\newcommand{\ptensor}{\bullet}
\newcommand{\qudit}{\text{qu\emph{d}it}\xspace}
\newcommand{\qudits}{\text{qu\emph{d}its}\xspace}
\NewDocumentCommand{\CQSTATE}{O{}}{\textsf{CQ}^{#1}}
\NewDocumentCommand\interp{sm}
        {\IfBooleanTF{#1}
                {\conjugate{\left\llbracket #2 \right\rrrbracket}}
                {\left\llbracket #2 \right\rrbracket}
        }
\NewDocumentCommand\commute{smm}
        {#2 \IfBooleanT{#1}{\not}\upmodels #3}
\NewDocumentCommand{\symplecticform}{O{}mm}{\omega_{#1}\left(#2,#3\right)}
\NewDocumentCommand{\SYMPLECTICFORM}{O{}}{\overline{\omega}_{#1}}
\NewDocumentCommand{\discard}{O{}m}{\texttt{discard}_{#1}(#2)}
\NewDocumentCommand{\dup}{O{}m}{\texttt{dup}_{#1}(#2)}
\NewDocumentCommand{\norm}{O{} m}{\left\lVert #2 \right\rVert_{#1}}
\NewDocumentCommand{\pseudoinverse}{O{} m}{\textsf{pinv}_{#1}(#2)}
\NewDocumentCommand{\pseudomuinverse}{O{} m}{\textsf{minv}_{#1}(#2)}
\NewDocumentCommand{\psiof}{sm}{
    \IfBooleanTF{#1}{(#2)^{\psi}}
                    {{#2}^{\psi}}
}
\begin{document}


\title{Condensed Encodings of Projective Clifford Operations in Arbitrary Dimension} 


\author{Sam Winnick}%
\email[Corresponding author: ]{samwinnick1@hotmail.com}
\affiliation{%
  {Institute for Quantum Computing and Department of Combinatorics \& Optimization, University of Waterloo. Waterloo, Ontario}
}

\author{Jennifer Paykin}
\affiliation{%
  {Intel Corporation. Hilsboro, Oregon}
}


\date{\today}

\begin{abstract}
    We provide a careful analysis of a structure theorem for the $n$-\qudit projective Clifford group and various encoding schemes for its elements. In particular, we derive formulas for evaluation, composition, and inversion. Our results apply to all integers $d\geq2$, most notably the case where $d$ is even.
\end{abstract}

\pacs{}

\maketitle 

\section{Introduction}

In quantum computing, implementations of stabilizer simulation~\citep{aaronson2004,gidney2021stim,smith2023clifford}, circuit optimizers~\citep{paykin2023pcoast,schneider2023sat}, formal verification~\citep{fang2024symbolic} and more rely on efficient algebraic \emph{encodings} of Pauli and Clifford operators---
finite descriptions often consisting of discrete-valued vectors and matrices. For example, a Pauli tableau~\citep{aaronson2004} encodes a projective Clifford operator (an equivalence class of Clifford operators modulo phase) by recording its conjugation action on each standard non-central generator of the Pauli group. Such encodings are efficient as they represent an $n$-qubit projective Clifford in polynomial space, or equivalently, simulate an $n$-qubit stabilizer state in polynomial time. 

Encodings of Paulis and Cliffords have also been generalized to \qudits---quantum systems of arbitrary dimension $d$---which are of interest for both theoretical and practical applications~\citep{App05,hostens2005stabilizer,looi2008quantum,gurevich2012weil,deBeadrap2013linearized,raussendorf2023role,brandl2024efficient}.  
These encodings are well-behaved for odd dimensions $d$, but are more complicated for even dimensions. For example, in the relation $Y=\tau XZ$ satisfied by the \qudit Pauli operators, $\tau$ is a primitive $d$th root of unity when $d$ is odd and a primitive $2d$th root of unity when $d$ is even.
Letting $d'$ be the order of $\tau$~\citep{App05}, the elements of the Pauli group $\tau^r X^x Z^z$ are in bijective correspondence with tuples $(r,(x,z)) \in \Zd' \times \Zd^{2n}$.
Encodings of \qudit projective Cliffords found in the literature typically incorporate the ring $\Zd'$ in some way when considering arbitrary dimension $d$.

Working within $\Zd'$ is not always straightforward however, and much of the work in the literature treats the even and odd cases separately.
For example, when $d$ is even, certain calculations \eg about Wigner functions~\citep{mari2012positive,gross2006hudson} break down because $2$ lacks an inverse mod $d$. 

This work develops the theory of \emph{condensed encodings}, an encoding scheme that uses only transformations over $\Zd$ and not $\Zd'$.
It is well-known that the projective Clifford group $\pcl_{d,n}$ is an extension of the symplectic group $\spgroup(\Zd^{2n})$ by the group of linear functionals
$V^*$~\citep{bolt1961clifford}, from which Lagrange's Theorem implies the existence of a bijection between $\pcl_{d,n}$ and $(\Zd^{2n})^*\times \spgroup(\Zd^{2n})$. Choosing such a bijection is equivalent to choosing a section $\pcl_{d,n}\leftarrow\spgroup(\Zd^{2n})$, and the standard basis for $\Zd^{2n}$ provides a canonical choice, which we call the \emph{condensed encoding}.

Condensed encodings appear implicitly 
in the literature, notably by \citet{gurevich2012weil}, but as far as we are aware, relevant formulas have not previously been made explicit.
\citet{raussendorf2023role} noted that the group law for such encoded projective Clifford operations depends on the group $2$-cocycle associated with the extension~\citep{conrad2018group}, but did not provide explicit formulas for the cocycle.
In this paper we develop a concrete formula for a related \emph{phase correction} function, which enables us to evaluate a projective Clifford in terms of its condensed encoding, \ie given an encoding, compute the corresponding Clifford action on the Pauli group. The phase correction function also gives rise to formulas for composition and inversion, all of which are necessary for applications of condensed encodings to stabilizer simulation and error correction.

Our technical treatment of encodings makes several contributions: (1) We introduce novel notational tricks for dealing with the even and odd cases uniformly. (2) We offer a careful analysis of encodings of projective Clifford operations involving the extended phase space in the even case using homomorphic relations and symplectic lifts. (3) We provide explicit formulas for condensed encodings, which have so far only been noted implicitly. (4) We introduce the condensed product $\star$, which plays a role similar enough to ordinary matrix multiplication to serve as a preferable substitute in applications, and features better compatibility with condensed encodings.

\textbf{Uniform reasoning in even and odd dimensions.}
Although the condensed encoding does not use the ring $\Zd'$ itself, the proofs of its correctness do. In order to improve reasoning about the odd and even cases in a unified manner, we introduce the $\Zd$-module $\frac12\Zd'$ obtained by adjoining $\Zd$ with an element $\frac12$ satisfying $\frac12+\frac12=1$. This is achieved using the isomorphisms $2:\frac12\Zd'\stackrel\sim\to\Zd'$ and $\frac12:\Zd'\stackrel\sim\to\frac12\Zd'$ in place of multiplication by $2$ or $\frac12$, respectively. We apply this ``fake $\frac12$'' technique in our analysis of the projective Clifford group and are explicit about the domain of each variable in $\Zd$, $\Zd'$, or $\frac12\Zd'$. We also use explicit notation to keep track of reduction (mod $d$) $\overline{\,\cdot\,}:\Zd'\to\Zd$ and inclusion $\underline{\,\cdot\,}:\Zd\to\Zd'$.

\textbf{Other encodings of \qudit Cliffords.}
In the process of developing the theory of condensed encodings, we recover other encodings of \qudit projective Cliffords found previously in the literature. We first discuss encodings involving the displacement operators $D_{x,z}=X^xZ^z$, which we refer to as $D$-encodings~\citep{hostens2005stabilizer}. The drawbacks of these $D$-encodings motivate encodings the use of Weyl operators $\Delta_{x,z}=\tau^{x\cdot z}X^xZ^z$, which we refer to as $\Delta$-encodings~\citep{App05,deBeadrap2013linearized}. We then explore various $\Delta$-encodings using the extended phase space $\Zd'^{2n}$, which we call \textit{extended encodings}, before finally arriving at the \textit{condensed encodings}, which involve the original phase space $\Zd^{2n}$. 

The most closely related encodings found in the literature are the $\Delta$-encodings by \citet{App05} and \citet{deBeadrap2013linearized}. The former gives a detailed account of the single \qudit case, and relies on factoring matrices in $\spgroup(\Zd'^2)=\mathrm{SL}(\Zd'^2)$, while the latter states the result for multiple \qudits but does not provide a fully detailed proof.
Our analysis of $\Delta$-encodings of projective Clifford operations is different than Appleby's in that, rather than working with the special linear group, we rely on basic properties of homomorphic relations as well as the fact that a symplectomorphism mod $d$ can be \emph{lifted} to a symplectomorphism mod $d'$ by a theorem of \citet{newmansmart}.

\textbf{The condensed product.}
\citet{raussendorf2023role} noted that when $d$ is even, elements of the Pauli group of order at most $d$ play a special role. We introduce a non-associative operation $\star$ that preserves the subset $\mathbf Q_{d,n}$ of the Pauli group consisting of such elements, generalizing the Hermitian product used by \citet{paykin2023pcoast}. This operation, as well as the condensed encodings themselves, are motivated by the authors' recent work to construct a programming language for \qudit projective Cliffords called $\lambdaP$~\citep{paykin2024qudit}. In $\lambdaP$, types correspond to elements of $\mathbf Q_{d,n}$ and terms correspond to projective Cliffords via a Curry-Howard correspondence between $\lambdaP$ and a categorical model of condensed encodings. The formulas we derive in this paper therefore arise both in the categorical semantics as well as in the operational semantics of $\lambdaP$ via the $\beta$-reduction rules. 

\textbf{Outline.}
The paper proceeds as follows: \cref{sec-paulicliff} reviews the Pauli and projective Clifford groups for arbitrary $d$. \cref{sec-denc,sec-deltaenc,sec-lemmas,sec-neccsuff} cover a series of encodings, refined through a series of steps to finally arrive at the desired condensed encoding scheme. These different encodings are summarized in \cref{fig:Encodings}. At the heart of the proof when $d$ is even is the nontrivial fact that a symplectic form on $\Zd^{2n}$ can be lifted to a symplectic form on $\mathbb Z_{2d}^{2n}$. Finally, \cref{sec-recipe,sec-odot} work out explicit formulas and identities involving condensed encodings, and  \cref{sec-examples} collects important examples.
    



\begin{figure*}
    \centering
    \begin{ruledtabular}
    \begin{tabular}{p{0.3\textwidth} p{0.2\textwidth} p{0.45\textwidth}}
        $D$-encoding $(r,\psi)$

        (\cref{sec-denc})
            & {$\!\begin{aligned}[t]
                r &: V \to \tfrac12 R' \\
                \psi &: V \to V \end{aligned}$}
            &
            $D_v \mapsto \zeta^{r v} D_{\psi v}$ a projective Clifford 
            
            (center-fixing group automorphism)
        \\ \hline
        pre-$\Delta$-encoding $(\lambda,\phi)$

        (\cref{sec-lemmas})
            & {$\!\begin{aligned}[t]
                \lambda &: V' \to \tfrac12 R' \\
                \phi &: V' \to V'
                \end{aligned}$}
            &
            $\Delta_v \mapsto \zeta^{\lambda v} \Delta_{\phi v}$ a center-respecting total relation
        \\ \hline
        $\Delta$-encoding $(\lambda,\phi)$
        
        (\cref{sec-deltaenc,sec-lemmas})
            & {$\!\begin{aligned}[t]
                \lambda &: V' \to R \leq \tfrac12 R' \\
                \phi &: V' \to V'
                \end{aligned}$}
            & 
            $\Delta_v \mapsto \zeta^{\lambda v} \Delta_{\phi v}$ a projective Clifford
        \\ \hline
        Condensed encoding $(\mu,\psi)$

        (\cref{sec-recipe})
            & {$\!\begin{aligned}[t]
                \mu &: V \to R \\
                \psi &: V \to V
                \end{aligned}$}
            & $\Delta_{\underline b} \mapsto \zeta^{\mu b} \Delta_{\underline{\psi b}}$ a projective Clifford
            
            given by action on basis elements $b \in V$
    \end{tabular}
    \end{ruledtabular}
    \caption{Variants of encodings of center-respecting relations and projective Cliffords.}
    \label{fig:Encodings}
\end{figure*}

\section{The Pauli group and the projective Clifford group}
\label{sec-paulicliff}

Let $d>1$ be an integer and let $d'=2d$ if $d$ is even and $d'=d$ if $d$ is odd. Let $\tau$ be a primitive $d'$th root of unity, and set $\zeta=\tau^2$. Then $\zeta$ has order $d$, regardless of whether $d$ is even or odd. Next, set $R=\mathbb Z_d$ and $R'=\mathbb Z_{d'}$, and define $\frac12R'$ by adjoining half-elements to $R$ if necessary. Explicitly, an abelian group presentation for $\frac12R'$ is:
\begin{align*}
    \frac12R' = \left.\bigg(1,\frac12\,\right\rvert\,\underbrace{1+\cdots+1}_d=0\textrm{ and }\frac12+\frac12=1\bigg)
\end{align*}
If $d$ is even, $2$ has no multiplicative inverse, so $\tfrac12$ is a new element; if $d$ is odd, $\tfrac12=\tfrac{d+1}2\in R$, so $R=R'=\tfrac12R'$. In either case we have mutually inverse isomorphisms of $R$-modules $2:\frac12R'\stackrel\sim\to R'$ and $\frac12:R'\stackrel\sim\to\frac12R'$ defined in the obvious way. Our reason for introducing $\frac12R'$ is so that we can interpret $\zeta^r$ as $\tau^{2r}$ where $2r\in R'$ is the image of $r\in\frac12R'$ under the isomorphism $2$. Regardless of parity, $\frac12R'=\{t\in R'\mid t\in R'\}$, thus justifying the notation. 

\begin{remark}
The $R$-module $\frac12R'$ should not be confused with the ring $R[\frac12]$ obtained by adjoining $\frac12$ in the sense of ring theory. In particular, $\frac12R'$ is not closed under multiplication.
\end{remark}

We write $X$ and $Z$ for the single-\qudit Pauli matrices. In the computational basis, their action is:
\begin{align*}
    X\ket r=\ket{r+1} && Z\ket r=\zeta^r\ket r && r\in R
\end{align*}
We also set $Y=\tau XZ$. For each non-negative integer $n$, the $n$-\qudit Pauli matrices are
\begin{align*}
    X^q=X^{q_1}\otimes\cdots\otimes Z^{q_n} && Z^p=Z^{p_1}\otimes\cdots\otimes Z^{p_n} && q,p\in R^n
\end{align*}
where $\otimes$ is the Kroncker product. Let $V=R^n \oplus R^n = R^{2n}$. For each $v=(q,p)\in V$, we write $D_v=X^qZ^p$. 



The Pauli group, generated by $X$, $Y$, and $Z$, has order $d' d^{2n}$:
\begin{align*}
    \pauligroup_{d,n} &= \langle X,Y,Z\rangle = \langle \tau,X,Z\rangle = \langle\tau,D_v\,|\,v\in V\rangle \\
    &= \{\tau^tD_v\,|\,t\in R',\,v\in V\} = \{\zeta^tD_v\,|\,t\in\tfrac12R',\,v\in V\}
\end{align*}
It gives rise to the projective Clifford group as follows:
\small \begin{align*}
    \pcl_{d,n}&=\{\gamma\in\mathrm{Aut\,P'}_{d,n}\,|\,\gamma z=z\,\forall\,z\in\mathrm Z(\pauligroup_{d,n})\} \cong \{[U]\in\mathrm{PU}(d^n)\,|\,UPU^{-1}\in\pauligroup_{d,n}\,\forall\,P\in\pauligroup_{d,n}\}
\end{align*} \normalsize
Here $\mathrm Z(G)=\{z\in G\,|\,gz=zg\,\forall\,g\in G\}$ is the center. The isomorphism $\cong$ above identifies each automorphism of the form $g\mapsto UgU^{-1}$ with the equivalence class $[U]$ of unitaries up to $\mathrm{U}(1)$ phase. Either the Stone-von Neumann Theorem or the Skolem-Noether Theorem may be used to establish the nontrivial direction of these isomorphisms. In the theory, we characterize $\pcl_{d,n}$ as the group of center-fixing automorphisms, as on the left-hand-side of the $\cong$, though examples are typically given as on the right-hand-side, since they are given by unitary matrices, eg. the matrix $S=(\begin{smallmatrix}
    1\\&i
\end{smallmatrix})$.


\begin{remark}
    Some applications consider the Pauli group generated by $X$ and $Z$ rather than $X$, $Y$, and $Z$; we refer to this variant as the \emph{small} Pauli group. In this work we favor $\pauligroup_{d,n}=\langle X,Y,Z\rangle$ for several reasons. First, the projective Clifford group arising from $\langle X,Y,Z\rangle$ contains the projective $S$ gate, \textit{i.e.} conjugation by the matrix $S=(\begin{smallmatrix}1\\&i\end{smallmatrix})$, whereas the version arising from $\langle X,Z\rangle$ does not. Second, $\pcl_{d,n}$ is equivalently defined in terms of the Heisenberg-Weyl group $\mathbf{HW}_{d,n}=\langle e^{i\theta},X,Z\,|\,\theta\in\mathbb R\rangle$ in place of $\pauligroup_{d,n}$, so taking $\langle\tau\rangle$ as the center is sufficient. The structure of both the small and ``large'' versions of the projective Clifford group is worked out by \citet{bolt1961clifford}.
\end{remark}

\section{\(D\)-encodings}\label{sec-denc}

In this section we investigate the properties of $D_v=X^qZ^p$ introduced in \cref{sec-paulicliff}. Later, in \cref{sec-deltaenc}, we will consider an alternative $\Delta_v$. We refer to $D_v$ as \textit{displacement operators} and to $\Delta_v$ as \textit{Weyl operators}. The advantage of $D_v$ is that each $g\in\pauligroup_{d,n}$ has the form $\zeta^tD_v$ for unique $t\in\frac12R'$ and $v\in V$. With $(t,v)$ standing for $\zeta^tD_v$, the group law for $\pauligroup*_{d,n}'$ reads:
\begin{align}\label{dencpaulilaw}
    (t_1,v_1)(t_2,v_2) = (t_1+t_2+{p_1\cdot q_2},v_1+v_2)
\end{align}
where $v_i=(q_i,p_i)$ and $p_1\cdot q_2$ is put through the \textit{inclusion} homomorphism $R\to\frac12R'$. In this sense, the dot product $\cdot:V\otimes V\to R$ encodes the group law of $\pauligroup_{d,n}$. Likewise, the symplectic form $\omega:V\otimes V\to R \leq \Zd*$ given by $\omega(v_1,v_2)=p_1\cdot q_2-p_2\cdot q_1$ encodes the group commutator of $\pauligroup_{d,n}$:
\begin{align}\label{dcommutator}
    (t_1,v_1)(t_2,v_2)(t_1,v_1)^{-1}(t_2,v_2)^{-1}=({\omega(v_1,v_2)},0)
\end{align}

The existence and uniqueness of such \textit{$D$-encodings} of elements of $\pauligroup_{d,n}$ (\textit{i.e.} as pairs $(t,v)$ or expressions $\zeta^tD_v$) likewise implies that each $\gamma\in\pcl_{d,n}$ has a unique \textit{$D$-encoding} by a pair of functions $(r,\psi)$ of the forms $r:V\to\frac12R'$ and $\psi:V\to V$.\footnote{This $D$-encoding of Cliffords is very similar to the one presented by \citet{hostens2005stabilizer}, except their extended phase space falls entirely within $\mathbb{Z}_{2d}$ rather than $\Zd*$.} By this we mean that $\gamma(D_v)=\zeta^{rv}D_{\psi v}$ for each $v\in V$. As with $D$-encodings of Paulis, we may express $D$-encodings of projective Cliffords either using the ordered pair notation or using the original operator notation. 

Not every pair $(r,\psi)$ of functions of these types is the $D$-encoding of some projective Clifford. For instance, using the fact that $\gamma$ is a homomorphism and the uniqueness of $D$-encodings, we find that $\psi\in \spgroup(V,\omega)$ by extracting the exponent of the following:
\begin{align}\label{psihomo}
    \zeta^{\omega(v_1,v_2)}=\gamma(D_{v_1}D_{v_2}D_{v_1}^{-1}D_{v_2}^{-1})=\gamma(D_{v_1})\gamma(D_{v_2})\gamma(D_{v_1})^{-1}\gamma(D_{v_2})^{-1}=\zeta^{\omega(\psi v_1,\psi v_2)}
\end{align}

A major disadvantage of $D$-encodings of projective Cliffords is that the structure of $r$ is unclear. Consider, for example, the single-qubit $S$ gate. In this case, $\gamma$ is conjugation by $S=(\begin{smallmatrix}1\\&i\end{smallmatrix})$, where $SXS^{-1}=Y$ and $SZS^{-1}=Z$. Then $r:\mathbb Z_2^2\to\frac12R'=\{0,\frac12,1,\frac32\}$ is given by $r(0,0)=r(0,1)=0$ and $r(1,0)=r(1,1)=\frac12$. Notably, $r$ is not a homomorphism since $r(1,0)+r(1,0) = \tfrac12 + \tfrac12 \neq 0 = r(0,0) = r((1,0) + (1,0))$. 

Even though $r$ might not be a homomorphism, its action can be determined from $\psi$ on every standard basis vector $b \in V$. To see why, consider that every $v \in V$ is either $0$ or has the form $u + b$ for some standard basis vector $b$ and some $u \in V$ smaller than $v$ (in the obvious sense). Furthermore, from \cref{dencpaulilaw} we derive the following:
\begin{align}\label{denc-scalarprop}
    rv_1+rv_2=r(v_1+v_2)+{c(v_1,v_2)-c(\psi v_1,\psi v_2)}
\end{align}
where $c((q_1,p_1),(q_2,p_2))=p_1\cdot q_2$. Note that on the right-hand-side of \cref{denc-scalarprop}, the $c$ function is valued in $R\leq\tfrac12R'$. The inclusion implicit here and in \cref{dcommutator} is a homomorphism, unlike (and not to be confused with) the subset inclusion $R\subseteq R'$, which is not (when $d$ is even).

\section{\(\Delta\)-encodings}\label{sec-deltaenc}

There is another approach that allows for better structured encodings. Initially, this comes at the cost of uniqueness, but we later address this by introducing \textit{condensed encodings}. The first goal is to replace the bilinear form $c$ in \cref{denc-scalarprop} by the symplectic form $\omega$, or something similar, since $\psi$ being a symplectomorphism would then imply the new function replacing $r$ is a homomorphism. This is an oversimplification of the end result, which involves a tremendous amount of technicalities, but does ultimately lead to the condensed encoding result, \cref{thm-denseenc}. This approach involves using using the \textit{Weyl operators} $\Delta_v$ in place of the displacement operators discussed in \cref{sec-denc}, appearing for instance in \citet{bolt1961clifford} and \citet{AFMY17}. 

Set $V'=(R')^{2n}$, and given $v=(q,p)\in V'$, let $\Delta_v=\tau^{q\cdot p}D_{\overline v}$ where the overline $\overline{v}$ indicates the reduction mod $d$ homomorphism $V'\to V$. The notation $\zeta^t\Delta_v$ with $t\in\frac12R'$ and $v\in V'$ \emph{non-uniquely} expresses an element of $\pauligroup_{d,n}$; for instance, for a single qubit we have $\Delta_{1,0}=X=\Delta_{3,0}$ and $\Delta_{1,1}=Y\neq-Y=\Delta_{3,1}$. Nevertheless, writing $(t,v)$ in place of the \emph{expression} $\zeta^t\Delta_v$, the group law for $\pauligroup_{d,n}$ may be expressed non-uniquely as:
\begin{align}
    (t_1,v_1)(t_2,v_2)=(t_1+t_2+\tfrac12\omega'(v_1,v_2),v_1+v_2)
\end{align}
Here, $\omega':V'\otimes V'\to R'$ is the \textit{standard extended symplectic form}  $\omega'((q_1,p_1),(q_2,p_2))=p_1\cdot q_2-p_2\cdot q_1$, and $\tfrac12$ denotes the inverse of the isomorphism $2:\tfrac12R'\stackrel\sim\to R'$. Observe that $\Delta_v^t=\Delta_{tv}$, and in particular, $\Delta_v^d=\Delta_{dv}=\tau^{d^2q\cdot p}D_{dv}=I$ for all $v'\in V$. Thus $\Delta_v$ has order dividing $d$, in contrast to the operators $D_v$, where the single qubit operator $D_{1,1}=XZ$ has order $d'=4$.


Having noted the complicated behavior of $\Delta_v$, we introduce the gadget $\langle\epsilon\rangle_{v_1,v_2}$ to keep track of when negative signs are introduced. Let $v_1=(q_1,p_1),v_2=(q_2,p_2)\in V'$ with $v_2=v_1+d\epsilon$ for some $\epsilon=(\epsilon_{\mathrm q},\epsilon_{\mathrm p})\in\mathbb Z_{d'/d}^{2n}$. Then $q_2\cdot p_2=q_1\cdot p_1+d\langle\epsilon\rangle_{v_1,v_2}$, where we define:
\begin{align}\label{eqn-computeeps}
    \langle\epsilon\rangle_{v_1,v_2}=\overline q_1\cdot\epsilon_{\mathrm p}+\overline p_1\cdot\epsilon_{\mathrm q}=\overline q_2\cdot\epsilon_p+\overline p_2\cdot\epsilon_q\in\mathbb Z_{d'/d}=\begin{cases}\mathbb Z_2&:d\textrm{ even}\\\textrm{trivial}&:d\textrm{ odd}\end{cases}
\end{align}

It follows that $\Delta_{v_2}=(-1)^{\langle\epsilon\rangle_{v_1,v_2}}\Delta_{v_1}$. We will make frequent use of this $\langle\epsilon\rangle$ notation. Given $v_1$ and $\epsilon$, we may write $\langle\epsilon\rangle_{v_1}$ as shorthand for $\langle\epsilon\rangle_{v_1,v_1+d\epsilon}$, and given $v_1$ and $v_2$ with equal reduction mod $d$, we may implicitly define $\epsilon$ by writing $\langle\epsilon\rangle_{v_1,v_2}$. If the context is clear, we may even simply write $\langle\epsilon\rangle$. This lets us account for the correction factor without overemphasizing the details. One obvious property is that $\langle\epsilon\rangle_{v,v}=0$, and a less obvious one is that $\langle\epsilon\rangle_0=0$. The latter implies $\langle\epsilon\rangle_{v_1,v_2}=\frac1d\omega'(v_1,v_2)$ (but still only when $v_2=v_1+d\epsilon$).

We now turn our attention to projective Cliffords. Like with $D$-encodings, it is possible to consider encodings not just of Paulis, but also of projective Cliffords. Let $\gamma\in\pcl_{d,n}$ and suppose there exist functions $\lambda : V' \to \tfrac12 R'$ and $\phi : V' \to V'$ such that $\gamma(\Delta_v)=\zeta^{\lambda v} \Delta_{\phi v}$. We call such a pair $(\lambda,\phi)$ a \textit{$\Delta$-encoding} of $\gamma$. We discuss the existence of $\Delta$-encodings and related variants in the next section.

As we did for $D$-encodings of projective Cliffords, we now ask: what properties characterize $\lambda$ and $\phi$?
For one, consider that we must have $\gamma(\Delta_{v_1}\Delta_{v_2})=\gamma(\Delta_{v_1})\gamma(\Delta_{v_2})$. Then:
\begin{align*}\gamma(\Delta_{v_1}\Delta_{v_2})&=\zeta^{\lambda(v_1+v_2)+\frac12\omega'(v_1,v_2)}\Delta_{\phi(v_1+v_2)}\\
    \gamma(\Delta_{v_1})\gamma(\Delta_{v_2})&=\zeta^{\lambda v_1+\lambda v_2+\frac12\omega'(\phi v_1,\phi v_2)}\Delta_{\phi v_1+\phi v_2}
\end{align*}
It need not necessarily be the case that $\phi(v_1+v_2)=\phi v_1+\phi v_2$ since
the function taking each $v\in V'$ to $\Delta_v$ is not injective in general. However, the function taking $\overline{v}\in V$ to $\{\pm\Delta_v\}$ (where $v \in V'$ is any lift of $\overline v$) is a well-defined injective function. It follows that the function $\overline\phi:V'\to V$ obtained by reducing the output mod $d$ is $R'$-linear, which then allows us to reduce the \emph{input} mod $d$ as well. We use the following notation for the $R$-linear map obtained from $\phi$ by reducing the input and output mod $d$: 
\begin{align*}
\psi=\overline{\overline\phi}:V\to V
\end{align*}
Finally, analysis of the phase exponent yields a formula in analogy to \cref{denc-scalarprop}:
\begin{align}\label{deltaenc-scalarprop}
    \lambda v_1+\lambda v_2-\lambda(v_1+v_2)+\frac d2\langle\epsilon\rangle_{\phi(v_1+v_2),\phi v_1+\phi v_2}=\frac12\omega'(v_1,v_2)-\frac12\omega'(\phi v_1,\phi v_2)\in\frac12R'
\end{align}
%
If $\phi$ is $R'$-linear then $\langle\epsilon\rangle=0$, in which case $\lambda$ is a homomorphism if and only if $\phi\in\spgroup(V',\omega')$. If we now ``multiply'' \cref{deltaenc-scalarprop} by $2:\frac12R'\stackrel\sim\to R'$ and then reduce mod $d$, we obtain:
\begin{align}
    \overline{2\lambda v_1+2\lambda v_2-2\lambda(v_1+v_2)}=\omega(\overline v_1,\overline v_2)-\omega(\psi\overline v_1,\psi\overline v_2)\in R && \textrm{for all }v_1,v_2\in V'
\end{align}
The left-hand-side of the equation above is symmetric and the right-hand-side is alternating, hence skew-symmetric. But if $d$ is even, this is not enough to conclude that $\psi\in\spgroup(V,\omega)$.

Observe however that $\psi$ is the same map involved in the $D$-encoding of $\gamma$, and so it must be the case that $\psi\in\spgroup(V,\omega)$.
It follows that $\lambda$ is ``almost'' a homomorphism, in the sense that for each $v_1,v_2\in V'$, there exists $\alpha\in\mathbb Z_{d'/d}$ such that $\lambda v_1+\lambda v_2-\lambda(v_1+v_2)=d\alpha$.

\section{Existence and refinement of \(\Delta\)-encodings}\label{sec-lemmas}


In this section we explore $\Delta$-encodings, which act on the extended phase space $(V',\omega')$. $\Delta$-encodings and their properties (particularly the existence of symplectic $\Delta$-encodings) are used both in the proof of the group structure theorem for $\pcl_{d,n}$ and in the proof of the bijection with $V^*\times\spgroup(V,\omega)$ (condensed encodings).

\begin{definition}
A \textit{$\Delta$-encoding} of a projective Clifford $\gamma\in\pcl_{d,n}$ is a pair $(\lambda,\phi)$ of functions $\lambda:V'\to R$ and $\phi:V'\to V'$ such that $\gamma(\Delta_v)=\zeta^{\lambda v}\Delta_{\phi v}$ for all $v\in V'$. We call a $\Delta$-encoding \textit{linear} if $\phi$ is ($R'$-)linear and \textit{symplectic} if $\phi\in\spgroup(V',\omega')$, or equivalently, if $\phi$ is linear and $\lambda$ is a homomorphism. We call two $\Delta$-encodings $(\lambda_1,\phi_1)$ and $(\lambda_2,\phi_2)$ of the same operation $\gamma$ \textit{equivalent} and write $(\lambda_1,\phi_1)\sim(\lambda_2,\phi_2)$.
\end{definition}

To better understand the properties of $\Delta$-encodings, it is helpful to consider a generalization. Beyond projective Cliffords, we may consider center-respecting functions $\gamma:\pauligroup_{d,n}\to\pauligroup_{d,n}$, by which we mean $\gamma(\zeta^t\Delta_v)=\zeta^t\gamma(\Delta_v)$ for all $t\in\frac12R'$ and $v\in V'$. Such functions are equivalently determined by right-definite relations on $\pauligroup_{d,n}$ of the form $\Delta_v\mapsto\zeta^{\lambda v}\Delta_{\phi v}$.\footnote{In the even case, this is not left-total since it is only defined on the proper subset $\{\Delta_v\,|\,v\in V'\}\subset\pauligroup_{d,n}$. For basic properties of relations, see \cref{appendix}.} Even more generally, we may consider center-respecting left-total relations $\gamma$ on $\pauligroup_{d,n}$ of the form $\Delta_v\mapsto\zeta^{\lambda v}\Delta_{\phi v}$ (by abuse of the $\mapsto$ notation) that aren't necessarily right-definite, which is possible on account of the non-uniqueness of the expressions $\Delta_v$ for $v\in V'$. 

\begin{definition}
Any pair $(\lambda,\phi)$ of functions $\lambda:V'\to\frac12R'$ and $\phi:V'\to V'$, which we call a \textit{pre-$\Delta$-encoding}, defines a center-respecting left-total relation on $\pauligroup_{d,n}$ by $\Delta_v\mapsto\zeta^{\lambda v}\Delta_{\phi v}$.
\end{definition}

The most severe case, where the relation $\Delta_v\mapsto\zeta^{\lambda v}\Delta_{\phi v}$ is not right-definite, is not of direct interest to us, but it arises hypothetically in order to obtain a characterization of $\Delta$-encodings (\cref{cor-neccsuff}). For now, we are mostly concerned with the pre-$\Delta$-encodings that determine center-respecting functions, as well as $\Delta$-encodings, (which determine projective Clifford operations). Naturally, we extend the definitions of \textit{linear} and \textit{symplectic} to pre-$\Delta$-encodings, and again write $(\lambda_1,\phi_1)\sim(\lambda_2,\phi_2)$ for two pre-$\Delta$-encodings defining the same relation. We caution that for linear pre-$\Delta$-encodings, the condition that $\lambda$ be a homomorphism is not equivalent to the condition that $\phi\in\spgroup(V',\omega')$. This is only true for $\Delta$-encodings.

Thus, there are two axes of variants of the notion of \textit{$\Delta$-encoding}, but most of them are introduced as temporary support towards our goal of understanding how to encode projective Clifford operations, and this ultimately all leads to the condensed encodings of \cref{sec-recipe}. The types of the functions involved in these variants are shown in \cref{fig:Encodings}. 

In this section our goal is to show that symplectic $\Delta$-encodings exist by first proving the existence of unstructured $\Delta$-encodings and then refining them, as shown in \cref{fig:linenc-sketch}. Along the way, we generalize our results to pre-$\Delta$-encodings so that we may later use \cref{thm-rdenough} to avoid extra checks when describing projective Cliffords using $\Delta$-encodings.



\begin{figure*}
    \centering
\[\begin{tikzcd}[column sep=small]
	{\text{projective Clifford}} \\
	\\
	{\text{(pre-)}\Delta\text{-encoding}} \\
	\\
	{\text{linear (pre-)$\Delta$-encoding}} & {\phi~\text{linear}} \\
	\\
	{\text{symplectic (pre-)$\Delta$-encoding}} & \begin{array}{c} \phi \in \spgroup(V',\omega') \\ \lambda~\text{linear iff proper encoding (\cref{thm-rdenough})}  \end{array}
	\arrow["{\text{existence (\cref{lem-delta})}}"', from=1-1, to=3-1]
	\arrow["{\text{linearization (\cref{lem-prelin})}}"', from=3-1, to=5-1]
	\arrow["{\text{symplectic lift (\cref{lem-linenc})}}"', from=5-1, to=7-1]
\end{tikzcd}\]
    \caption{Proof sketch of \cref{thm-linenc}---every projective Clifford has a symplectic $\Delta$-encoding}
    \label{fig:linenc-sketch}
\end{figure*}

Even though an arbitrary element of $\pauligroup_{d,n}$ has the form $\zeta^t\Delta_v$ for $t\in\frac12R'$ and $v\in V'$, the following lemma shows that there is no harm in restricting the codomain of $\lambda$ to the subgroup $R\leq\frac12R'$, but only when considering $\Delta$-encodings; when considering pre-$\Delta$-encodings, $\lambda$ is valued in $\frac12R'$.
\begin{lemma}\label{lem-lambdacodomain}
    Let $\gamma\in\pcl_{d,n}$ and suppose $\lambda:V'\to\frac12R'$ and $\phi:V'\to V'$ are functions satisfying $\gamma(\Delta_v)=\zeta^{\lambda v}\Delta_{\phi v}$ for all $v\in V'$. Then the image of $\lambda$ is contained in the subgroup $R\leq\frac12R'$.
\end{lemma}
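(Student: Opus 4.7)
The plan is to combine two facts already in hand: (i) every displacement operator satisfies $\Delta_v^d = I$, as noted in \cref{sec-deltaenc} (via $\Delta_v^d = \Delta_{dv} = \tau^{d^2 q\cdot p} D_{dv} = I$), and (ii) $\gamma$, being in $\mathrm{PCl}'_{d,n}$, is a group automorphism of $\mathrm P'_{d,n}$ and so commutes with taking $d$th powers and fixes $I$.

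First I would fix an arbitrary $v\in V'$ and apply $\gamma$ to both sides of $\Delta_v^d = I$. Using the hypothesis $\gamma(\Delta_v)=\zeta^{\lambda v}\Delta_{\phi v}$ together with the centrality of the scalar $\zeta$, the computation $(\zeta^{\lambda v}\Delta_{\phi v})^d = \zeta^{d\lambda v}\Delta_{\phi v}^d = \zeta^{d\lambda v}$ collapses, forcing
\[ \zeta^{d\lambda v} = 1 \quad \text{in } \mathrm U(1). \]

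Next I would translate this identity into a congruence in $R'$. Using the definition $\zeta^r = \tau^{2r}$ and the doubling isomorphism $2:\tfrac12R'\stackrel\sim\to R'$, set $a := 2\lambda v \in R'$. The identity then reads $\tau^{da}=1$, which since $\tau$ has order $d'$ is equivalent to $da \equiv 0 \pmod{d'}$.

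Finally I would split on the parity of $d$. If $d$ is odd, then $R = \tfrac12R'$ by definition and the claim is vacuous. If $d$ is even, then $d' = 2d$, so $da \equiv 0 \pmod{2d}$ is equivalent to $a$ being even in $\mathbb Z_{2d}$; but $a = 2\lambda v$ is even precisely when $\lambda v$ lies in the subgroup $R \leq \tfrac12R'$ rather than in the coset $R + \tfrac12$, giving $\lambda v \in R$. I do not anticipate a serious obstacle here: the argument is essentially automatic once the order-$d$ relation $\Delta_v^d = I$ is invoked, and the only delicate part is keeping the bookkeeping between $R$, $R'$, and $\tfrac12R'$ consistent under the doubling isomorphism---exactly the kind of reasoning for which the group $\tfrac12R'$ was introduced.
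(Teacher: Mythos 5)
Your proof is correct and follows essentially the same route as the paper's: apply $\gamma$ to the relation $\Delta_v^d = I$, use centrality of $\zeta$ and the fact $\Delta_w^d = I$ to collapse to $\zeta^{d\lambda v} = 1$, and conclude $\lambda v \in R$. The only cosmetic difference is that you spell out the parity bookkeeping via the doubling isomorphism, whereas the paper re-derives $\Delta_{d\phi v}=I$ through the $\langle\epsilon\rangle$ gadget; the substance is identical.
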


In contrast, for a $D$-encoding $(r,\psi)$ (eg. of the $S$ gate), $r:V\to\frac12R'$ may take half-element values. 

\begin{proof}
    Let $v\in V'$. Due to the property $\Delta_v^t=\Delta_{tv}$ and since $\gamma$ is a homomorphism, we have:
    \begin{align*}
        I=\gamma(I)=\gamma(\Delta_v^d)=(\gamma(\Delta_v))^d=\zeta^{d\lambda v}\Delta_{d\phi v}=\zeta^{d\lambda v+\frac d2\langle\overline{\phi v}\rangle_0}I=\zeta^{d\lambda v}I
    \end{align*}

    which implies $d\lambda v=0$, and thus $\lambda v\in R\leq\frac12R'$.
\end{proof}

\begin{lemma}\label{lem-delta}
    Let $\gamma:\pauligroup_{d,n}\to\pauligroup_{d,n}$ be a center-respecting function and let $\phi:V'\to V'$ be a function such that for any $v\in V'$ there exists $t\in\frac12R'$ such that $\gamma(\Delta_v)=\zeta^t\Delta_{\phi v}$. Then there exists a unique function $\lambda:V'\to\frac12R'$ such that $(\lambda,\phi)$ is a pre-$\Delta$-encoding of $\gamma$.
\end{lemma}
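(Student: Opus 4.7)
The plan is to construct $\lambda$ pointwise: for each $v \in V'$, take $\lambda v \in \tfrac12 R'$ to be the unique scalar with $\gamma(\Delta_v) = \zeta^{\lambda v}\Delta_{\phi v}$. Pointwise existence is immediate from the hypothesis on $\gamma$ and $\phi$, so the real content is verifying uniqueness for each $v$, and then checking that the resulting $\lambda$ actually defines a pre-$\Delta$-encoding of $\gamma$.

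For pointwise uniqueness, I would unfold $\Delta_{\phi v} = \tau^{q\cdot p}D_{\overline{\phi v}}$ where $\phi v = (q,p)$, so that $\zeta^{t}\Delta_{\phi v} = \tau^{2t + q\cdot p}D_{\overline{\phi v}}$. If $\zeta^{t_1}\Delta_{\phi v} = \zeta^{t_2}\Delta_{\phi v}$ in $\mathrm P'_{d,n}$, the uniqueness of the $\tau^s D_w$-expression of elements of $\mathrm P'_{d,n}$ (\cref{sec-paulicliff}) gives $2t_1 = 2t_2$ in $R'$, and the isomorphism $2:\tfrac12 R' \stackrel{\sim}\to R'$ then forces $t_1 = t_2$ in $\tfrac12 R'$.

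With pointwise existence-and-uniqueness in hand, the function $\lambda : V' \to \tfrac12 R'$ is well-defined. To see that $(\lambda,\phi)$ is a pre-$\Delta$-encoding of $\gamma$, I would extend the relation $\Delta_v \mapsto \zeta^{\lambda v}\Delta_{\phi v}$ center-respectingly and compare with $\gamma$: every element of $\mathrm P'_{d,n}$ has the form $\zeta^s \Delta_v$, and $\gamma(\zeta^s\Delta_v) = \zeta^s\gamma(\Delta_v) = \zeta^{s + \lambda v}\Delta_{\phi v}$ matches the extended relation. Uniqueness of $\lambda$ as a function is then a restatement of pointwise uniqueness.

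The only subtle point is the uniqueness step, which is mild: it reduces to the uniqueness of the $\tau^s D_w$ encoding in $\mathrm P'_{d,n}$ together with the isomorphism $2:\tfrac12 R'\stackrel{\sim}\to R'$. This is precisely why we must take the codomain of $\lambda$ to be $\tfrac12 R'$ in this generality; by contrast, \cref{lem-lambdacodomain} shows the image is in fact confined to $R \leq \tfrac12 R'$ once we assume $\gamma$ is a projective Clifford.
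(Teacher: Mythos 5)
Your proof is correct, and it is cleaner and more direct than the paper's. The paper proceeds in two stages: it first defines $\lambda_0$ on the subset $V\subseteq V'$ (invoking injectivity of $v\mapsto\Delta_v$ on $V$), then extends $\lambda_0$ to all of $V'$ via the explicit formula $\lambda(v+d\epsilon)=\lambda_0v+\tfrac d2\bigl(\langle\epsilon\rangle_v+\langle\delta\rangle_{\phi v}\bigr)$ of \cref{eqn-extendlambda}, built from the $\langle\epsilon\rangle$-gadget and the relation $\gamma(\Delta_{v+d\epsilon})=(-1)^{\langle\epsilon\rangle_v}\gamma(\Delta_v)$. Your version collapses both stages into one: for each $v\in V'$ the exponent $t$ in $\gamma(\Delta_v)=\zeta^t\Delta_{\phi v}$ is already uniquely determined, because $\Delta_{\phi v}$ is invertible, so $\zeta^{t_1}=\zeta^{t_2}$ forces $2t_1=2t_2$ in $R'$ and hence $t_1=t_2$ via the isomorphism $2:\tfrac12R'\stackrel\sim\to R'$. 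This uniqueness-for-each-$v$ observation is precisely what is needed (and is in fact the cleaner justification even for the paper's definition of $\lambda_0$ on $V$, where the stated appeal to injectivity of $v\mapsto\Delta_v$ is a bit beside the point). The one dividend of the paper's longer route is the explicit extension formula \cref{eqn-extendlambda}, which the paper expressly notes is reused verbatim in the proof of \cref{lem-lambda0extends}; your approach forgoes that formula but is otherwise complete, and your closing check that $(\lambda,\phi)$ encodes $\gamma$, by extending center-respectingly over $\mathrm P'_{d,n}=\{\zeta^s\Delta_v\}$, is exactly right.
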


Every center-respecting function $\gamma:\pauligroup_{d,n}\to\pauligroup_{d,n}$ has such a $\phi$, and hence a pre-$\Delta$-encoding. Likewise, each $\gamma\in\pcl_{d,n}$ has a $\Delta$-encoding. Intuitively, this lemma says the ``$t$''s in $\gamma(\Delta_v)=\zeta^t\Delta_{\phi v}$ can be conglomerated into a function $\lambda$.

\begin{proof}
    Existence is trivial. For uniqueness, simply note that for each $v\in V'$, we may extract the phase exponent $\lambda v$ from the quantity $\zeta^{\lambda v}\Delta_{\phi v}$, thereby determining $\lambda$.
\end{proof}

\begin{lemma}\label{lem-lambda0extends}
    For any functions $\lambda_0:V\to\frac12R'$ and $\phi:V'\to V'$, there exists a unique function $\lambda:V'\to\frac12R'$ extending $\lambda_0$ (with respect to the set inclusion $V\subseteq V'$) such that the relation $\Delta_v\mapsto\zeta^{\lambda v}\Delta_{\phi v}$ is right-definite, or equivalently, is the pre-$\Delta$-encoding of some center-respecting function $\gamma$.
\end{lemma}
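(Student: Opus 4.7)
The plan is to mirror the construction from the proof of \cref{lem-delta}, but starting from an arbitrary $\lambda_0$ on $V$ rather than one derived from a pre-existing $\gamma$. The essential observation is that, under the set inclusion $V \subseteq V'$, every $v' \in V'$ admits a unique decomposition $v' = v + d\epsilon$ with $v \in V$ and $\epsilon \in \mathbb{Z}_{d'/d}^{2n}$.

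For existence, I would define $\lambda : V' \to \tfrac{1}{2}R'$ by the formula
\begin{align*}
\lambda(v + d\epsilon) = \lambda_0 v + \frac{d}{2}\bigl(\langle\epsilon\rangle_v + \langle\delta\rangle_{\phi v}\bigr), \qquad d\delta = \phi v - \phi(v+d\epsilon),
\end{align*}
in direct analogy with \cref{eqn-extendlambda}. This visibly extends $\lambda_0$ (take $\epsilon = 0$, so $\delta = 0$). To verify right-definiteness of the induced relation $\Delta_{v'} \mapsto \zeta^{\lambda v'}\Delta_{\phi v'}$, I would compute, using $\Delta_{\phi(v+d\epsilon)} = (-1)^{\langle\delta\rangle_{\phi v}}\Delta_{\phi v}$, the identity $\zeta^{d/2} = -1$, and $\langle -\delta\rangle = \langle\delta\rangle$ in $\mathbb{Z}_{d'/d}$,
\begin{align*}
\zeta^{\lambda(v+d\epsilon)}\, \Delta_{\phi(v+d\epsilon)} = (-1)^{\langle\epsilon\rangle_v}\, \zeta^{\lambda_0 v}\, \Delta_{\phi v}.
\end{align*}
Since $\Delta_{v+d\epsilon} = (-1)^{\langle\epsilon\rangle_v}\Delta_v$ as well, both sides of the prospective functional assignment depend on $v'$ only through $\Delta_{v'}$, giving right-definiteness.

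For uniqueness, suppose $\lambda'$ is another extension producing a right-definite (hence functional) relation $\gamma'$. The function $\gamma'$ is automatically center-respecting, so $\gamma'(\Delta_{v+d\epsilon}) = (-1)^{\langle\epsilon\rangle_v}\gamma'(\Delta_v) = (-1)^{\langle\epsilon\rangle_v}\zeta^{\lambda_0 v}\Delta_{\phi v}$, which matches the corresponding value of the function $\gamma$ built from $\lambda$. Thus $\gamma = \gamma'$ on every $\Delta_{v'}$, and comparing $\zeta^{\lambda v'}\Delta_{\phi v'} = \zeta^{\lambda' v'}\Delta_{\phi v'}$ forces $\lambda = \lambda'$ pointwise. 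The equivalence with being ``the pre-$\Delta$-encoding of some center-respecting function $\gamma$'' is then immediate, since a right-definite total relation is precisely a function, and center-respectingness is built into the form of the relation. I expect the main technical step to be the sign-tracking in the existence calculation: the correction $\tfrac{d}{2}\langle\delta\rangle_{\phi v}$ in the formula for $\lambda$ is engineered precisely to cancel the sign $(-1)^{\langle\delta\rangle_{\phi v}}$ coming from the discrepancy between $\Delta_{\phi(v+d\epsilon)}$ and $\Delta_{\phi v}$, and this is the same bookkeeping already carried out in \cref{lem-delta}, so no new ideas are required.
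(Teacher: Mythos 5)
Your proposal follows the paper's own proof closely. The paper's argument for \cref{lem-lambda0extends} is deliberately terse: it notes that right-definiteness of $\Delta_v\mapsto\zeta^{\lambda v}\Delta_{\phi v}$ is equivalent to $\lambda$ satisfying \cref{eqn-extendlambda} for all $v\in V$ and $\epsilon\in\mathbb Z_{d'/d}^{2n}$, and reads existence and uniqueness off from that equivalence, pointing out that it is ``the same calculation'' as \cref{lem-delta}. You reconstruct exactly this: the unique decomposition $v'=v+d\epsilon$, the definition of $\lambda$ by \cref{eqn-extendlambda}, the verification that $\zeta^{\lambda(v+d\epsilon)}\Delta_{\phi(v+d\epsilon)}=(-1)^{\langle\epsilon\rangle_v}\zeta^{\lambda_0 v}\Delta_{\phi v}$ so that the assignment depends on $v'$ only through $\Delta_{v'}$, and uniqueness from center-respectingness of the resulting $\gamma$. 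Your write-up is in fact more explicit than the paper's.

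One shared issue is worth flagging, since you invoke $\delta$ without comment just as the paper does. The equation $d\delta=\phi v-\phi(v+d\epsilon)$ has a solution $\delta\in\mathbb Z_{d'/d}^{2n}$ if and only if $\overline{\phi v}=\overline{\phi(v+d\epsilon)}$, i.e.\ $\overline\phi$ descends to $V=V'/dV'$. In \cref{lem-delta} this is automatic because $\phi$ comes from a center-respecting $\gamma$, but \cref{lem-lambda0extends} as stated allows ``any function $\phi:V'\to V'$'', and then the hypothesis can fail: for $d=2$, $n=1$, take any $\phi$ with $\phi(0,0)=(0,0)$ and $\phi(2,2)=(1,0)$; then $\Delta_{(0,0)}=\Delta_{(2,2)}=I$ (since $\langle(1,1)\rangle_{(0,0)}=0$), but $\Delta_{\phi(0,0)}=I$ and $\Delta_{\phi(2,2)}=X$ differ by more than a scalar, so no $\lambda$ can make the relation right-definite. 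You have thus faithfully reproduced the paper's argument, including this unacknowledged requirement that $\overline\phi$ factor through the reduction mod $d$ --- which does hold in the lemma's actual uses (e.g.\ \cref{cor-neccsuff}, where $\phi$ is linear), but is not a consequence of the stated hypotheses. A fully rigorous version of your proof should either add that hypothesis or observe at the start that it is forced by the existence of any right-definite extension.
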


\begin{proof}
    First let $\lambda$ be any function extending $\lambda_0$, and consider the center-respecting relation $\gamma$ on $\pauligroup_{d,n}$ given by $\Delta_v\mapsto\zeta^{\lambda v}\Delta_{\phi v}$. Due to the center-respecting property, every element of $\pauligroup_{d,n}$ has one or two images under $\gamma$, differing by sign if there are two. If $v\in V\subseteq V'$ and $\epsilon\in\mathbb Z_{d'/d}^{2n}$, then the two possible images of the quantity
    \begin{align*}
        \Delta_{v+d\epsilon} = (-1)^{\langle\epsilon\rangle_v}\Delta_v
    \end{align*}

    are:
    \begin{align*}
        \pm\zeta^{\lambda(v+d\epsilon)}\Delta_{\phi(v+d\epsilon)} = \pm(-1)^{\langle\epsilon\rangle_v}\zeta^{\lambda_0v}\Delta_{\phi v}
    \end{align*}

    We may therefore define $\delta$ by $d\delta=\phi v-\phi(v+d\epsilon)$, which does not depend on $\lambda$. Now consider the function $\lambda$ defined as follows. For each $v\in V\subseteq V'$ and $\epsilon\in\mathbb Z_{d'/d}$, set:
    \begin{align}\label{eqn-extendlambda}
    \lambda(v+d\epsilon)=\lambda_0v+\frac d2(\langle\epsilon\rangle_v+\langle\delta\rangle_{\phi v})&&\textrm{where}&&d\delta = \phi v - \phi(v+d\epsilon)
    \end{align}

    Then $\lambda$ is defined on all of $V'$ and extends $\lambda_0$. Moreover, for each $v+d\epsilon\in V'$ we have:
    \begin{align}\label{secondlemeqn}
        \zeta^{\lambda(v+d\epsilon)}\Delta_{\phi(v+d\epsilon)} =(-1)^{\langle\epsilon\rangle_v}\zeta^{\lambda v}\Delta_{\phi v}
    \end{align}

    Finally, $\gamma$ will be a function if and only if it is right-definite. For this, suppose $\Delta_{v_1}=\Delta_{v_2}$ for some $v_1,v_2\in V'$, and let $v\in V\subseteq V'$ be (the inclusion in $V'$) of their reduction mod $d$. Then letting $\epsilon_i\in\mathbb Z_{d'/d}$ satisfy $v_i=v+d\epsilon_i$ (for $i=1,2$), \cref{secondlemeqn} implies:
    \begin{align*}
        \zeta^{\lambda v_1}\Delta_{\phi v_1} = (-1)^{\langle\epsilon_1\rangle_{v}}\zeta^{\lambda v}\Delta_{\phi v} = (-1)^{\langle\epsilon_1\rangle_{v} + \langle\epsilon_2\rangle_{v}}\zeta^{\lambda v_2}\Delta_{\phi v_2} = \zeta^{\lambda v_2}\Delta_{\phi v_2}
    \end{align*}

    This establishes right definiteness. Uniqueness follows by \cref{lem-delta}.
\end{proof}

\begin{lemma}\label{lem-prelin}
    If $(\lambda_0,\phi_0)$ is a pre-$\Delta$-encoding and $\overline\phi_0:V'\to V$ is $R'$-linear, then there exists a canonical linear pre-$\Delta$-encoding $(\lambda,\phi)$ such that $(\lambda_0,\phi_0)\sim(\lambda,\phi)$
\end{lemma}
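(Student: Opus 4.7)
The plan is to construct $\phi$ as a canonical $R'$-linear lift of $\overline{\phi}_0$ and then adjust $\lambda_0$ to absorb the resulting sign corrections so that $(\lambda,\phi)$ defines the same relation as $(\lambda_0,\phi_0)$. Since $V'=(R')^{2n}$ is a free $R'$-module on the standard basis $e_1,\ldots,e_{2n}$ and $\overline{\phi}_0 : V' \to V$ is $R'$-linear by hypothesis, I would define $\phi(e_i) \in V'$ to be the canonical coordinate-wise lift of $\overline{\phi}_0(e_i) \in V$ along the section $\{0,\ldots,d-1\}\hookrightarrow\{0,\ldots,d'-1\}$, and extend to all of $V'$ by $R'$-linearity. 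This gives a canonical $R'$-linear function $\phi : V' \to V'$ whose entries lie in $\{0,\ldots,d-1\}$. Since the reduction map $V' \to V$ is itself $R'$-linear, $\overline{\phi}$ and $\overline{\phi}_0$ are both $R'$-linear maps $V' \to V$ that agree on the standard basis, hence agree on all of $V'$.

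Next, for each $v \in V'$ the equality $\overline{\phi v} = \overline{\phi_0 v}$ ensures that $\phi v - \phi_0 v = d\delta_v$ for a unique $\delta_v \in \mathbb{Z}_{d'/d}^{2n}$, and the sign-correction rule of \cref{sec-deltaenc} gives $\Delta_{\phi v} = (-1)^{\langle\delta_v\rangle_{\phi_0 v}}\Delta_{\phi_0 v}$. Using the identity $\zeta^{d/2} = \tau^d = -1$ (which is vacuous when $d$ is odd), I would then define
\begin{align*}
    \lambda v \;=\; \lambda_0 v \;-\; \tfrac{d}{2}\,\langle\delta_v\rangle_{\phi_0 v} \;\in\; \tfrac12 R',
\end{align*}
which is well-defined since $\delta_v$ is uniquely determined by $v$. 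A direct check then shows $\zeta^{\lambda v}\Delta_{\phi v} = \zeta^{\lambda_0 v}\Delta_{\phi_0 v}$ for every $v \in V'$, i.e.\ the two pairs determine the same center-respecting total relation on $\mathrm P'_{d,n}$, so $(\lambda,\phi)\sim(\lambda_0,\phi_0)$. Because $\phi$ is $R'$-linear by construction, $(\lambda,\phi)$ is the desired canonical linear pre-$\Delta$-encoding.

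The main obstacle is bookkeeping with the gadget $\langle\cdot\rangle$ in the even-$d$ case: the freedom in choosing an $R'$-linear lift of $\overline{\phi}_0$ amounts to adding multiples of $d$ in each output coordinate, and each such perturbation twists $\Delta_{\phi v}$ by a sign that $\lambda$ must absorb through the $\tfrac{d}{2}$-correction above. In the odd-$d$ case the whole correction collapses, since $\mathbb{Z}_{d'/d}$ is trivial, every $\delta_v$ is zero, and one simply takes $\lambda = \lambda_0$ together with the linear $\phi$ constructed in the first step.
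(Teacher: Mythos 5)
Your proof is correct and follows essentially the same route as the paper: fix $\phi$ on the standard basis, extend $R'$-linearly, observe that $\overline{\phi}=\overline{\phi_0}$ because both are linear maps agreeing on a basis, and absorb the resulting $\pm 1$ corrections into $\lambda$ via the $\tfrac{d}{2}\langle\delta\rangle$ term. The only cosmetic difference is your choice of $\phi e_i$: you take the canonical coordinate-wise lift of $\overline{\phi}_0(e_i)$ into $\{0,\ldots,d-1\}^{2n}$, whereas the paper simply sets $\phi e_i = \phi_0 e_i$; both are canonical, both yield $\overline{\phi}=\overline{\phi}_0$, and both are compatible with the lemma. (The paper's choice has the small advantage that the correction $\delta_v$ vanishes at basis vectors, so $\lambda$ agrees with $\lambda_0$ there; yours makes $\phi$ depend only on $\overline{\phi}_0$.) Your sign $-\tfrac{d}{2}\langle\delta_v\rangle$ also agrees with the paper's $+\tfrac{d}{2}\langle\delta\rangle$, since $\tfrac{d}{2}$ has order $2$ in $\tfrac12 R'$ and $\langle\cdot\rangle$ is symmetric under swapping its two subscripts.
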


As we saw in \cref{sec-deltaenc}, if $(\lambda_0,\phi_0)$ is a $\Delta$-encoding, then $\overline\phi_0$ is always $R'$-linear.

\begin{proof}
    We call this process \textit{linearization}. Let $\phi:V'\to V'$ be the unique $R'$-linear map satisfying $\phi b=\phi_0b$ for each standard basis vector $b$ of $V'$. Since $\overline{\phi_0}$ is linear, $\overline{\phi v}=\overline{\phi_0v}$ for all $v\in V'$, allowing us to define $\lambda v=\lambda_0v + \frac d2\langle\delta\rangle_{\phi v,\phi_0v}$. Then by construction, $(\lambda,\phi)\sim(\lambda_0,\phi_0)$ and $\phi$ is linear.
\end{proof}


\begin{theorem}\label{thm-linlift}
    Let $\omega:V\otimes_RV\to R$ and $\omega':V'\otimes_{R'}V'\to R'$ be the canonical symplectic forms for the standard bases for $V$ and $V'$, respectively. Then each $\psi\in\spgroup(V,\omega)$ lifts to some $\phi\in\spgroup(V',\omega')$ (meaning $\psi=\overline{\overline\phi}$).
\end{theorem}

\begin{proof}
    See Theorem 1 on page 85 of \cite{newmansmart}.
\end{proof}

By use of condensed encodings, we will never need to \textit{compute} a symplectic lift $\phi\in\spgroup(V',\omega')$ of a given $\psi\in\spgroup(V,\omega)$, eg. in the computations of $\gamma(\Delta_v)$ or $\gamma^{-1}(\Delta_v)$; the existence of these liftings is only used in proofs. Still, we note that the lifting can be computed by solving a system of $4n^2$ equations and variables over $\mathbb F_2$, (assuming $d$ is even, of course). To see why, start by lifting $\psi$ to an $R'$-linear map $\phi_0$, and let $\Phi_0$ and $\Omega'$ be the $R'$-matrices for $\phi_0$ and $\omega_0$ in the standard basis. Then:
\begin{align}\label{searchforasol}
    \Phi_0^T\Omega'\Phi_0=\Omega'+dA
\end{align}
for some $\mathbb F_2$-matrix $A$. Then solve for an $\mathbb F_2$-matrix $E=(\begin{smallmatrix}E_1&E_2\\E_3&E_4\end{smallmatrix})$ such that:
\begin{align*}
    A = \begin{pmatrix}
            F_1^TE_3+F_3^TE_1+E_3^TF_1+E_1^TF_3 & F_1^TE_4+F_3^TE_2-E_3^TF_2+E_1^TF_4\\
            F_2^TE_3+F_4^TE_1+E_4^TF_1+E_2^TF_3 & F_2^TE_3 + F_4^TE_2 + E_4^TF_2 + E_2^TF_4
        \end{pmatrix}
\end{align*}
where $\Phi_0=(\begin{smallmatrix}F_1&F_2\\F_3&F_4\end{smallmatrix})$. This is a linear system in the entries of $E$. By design, $A=E^T\Omega'\Phi_0+\Phi_0^T\Omega'E$, so setting $\Phi=\Phi_0+dE$, the following equation shows that a solution to the linear system yields a symplectomorphism:
\begin{align*}
        \Phi^T\Omega'\Phi = \Phi_0^T\Omega'\Phi_0+d(E^T\Omega'\Phi_0+\Phi_0^T\Omega'E) = \Phi_0^T\Omega'\Phi_0 + dA = \Omega'
    \end{align*}
Since the reduction mod $d$ of $\Phi$ is the $R$-matrix $\Psi$ for $\psi$, \cref{thm-linlift} guarantees a solution.

\begin{lemma}\label{lem-linenc}
    Let $(\lambda_0,\phi_0)$ be a linear pre-$\Delta$-encoding such that $\psi=\overline{\overline\phi}_0\in\spgroup(V,\omega)$. Then $(\lambda_0,\phi_0)$ is equivalent to a symplectic pre-$\Delta$-encoding $(\lambda,\phi)$.
\end{lemma}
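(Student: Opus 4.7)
The strategy is to invoke \cref{thm-linlift} to replace $\phi_0$ by a genuine symplectomorphism $\phi$ lifting $\psi$, and then adjust $\lambda_0$ to compensate for the resulting sign discrepancy between $\Delta_{\phi v}$ and $\Delta_{\phi_0 v}$ for each $v$.

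First, I would apply \cref{thm-linlift} to $\psi = \overline{\overline{\phi}}_0 \in \mathrm{Sp}(V,\omega)$ to obtain an $R'$-linear symplectomorphism $\phi \in \mathrm{Sp}(V',\omega')$ with $\overline{\overline{\phi}} = \psi$. Since both $\phi$ and $\phi_0$ are $R'$-linear and reduce to $\psi$ modulo $d$, the image of $\phi - \phi_0$ lies in $(dR')^{2n}$. Hence for each $v \in V'$ there is a unique $\delta(v) \in \mathbb{Z}_{d'/d}^{2n}$ with $\phi v - \phi_0 v = d\,\delta(v)$. When $d$ is odd, $\mathbb{Z}_{d'/d}$ is trivial and one may take $\phi = \phi_0$ directly, so the content is entirely in the case $d$ even.

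Next, I would define
\[
\lambda v \;=\; \lambda_0 v \;-\; \tfrac{d}{2}\,\langle \delta(v)\rangle_{\phi_0 v,\,\phi v}.
\]
Using the identity $\Delta_{v_2} = (-1)^{\langle \epsilon \rangle_{v_1,v_2}} \Delta_{v_1}$ from \cref{sec-deltaenc} (applied with $v_1 = \phi_0 v$, $v_2 = \phi v$, $\epsilon = \delta(v)$), together with $\zeta^{d/2} = -1$ when $d$ is even, this yields
\[
\zeta^{\lambda v}\Delta_{\phi v} \;=\; \zeta^{\lambda_0 v - \frac{d}{2}\langle \delta(v)\rangle_{\phi_0 v, \phi v}} \cdot (-1)^{\langle \delta(v)\rangle_{\phi_0 v, \phi v}}\, \Delta_{\phi_0 v} \;=\; \zeta^{\lambda_0 v}\Delta_{\phi_0 v}
\]
for every $v \in V'$, which exhibits the pointwise agreement $(\lambda,\phi) \sim (\lambda_0,\phi_0)$. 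Since $\phi \in \mathrm{Sp}(V',\omega')$ by construction, $(\lambda,\phi)$ is a symplectic pre-$\Delta$-encoding.

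The main nontrivial input is \cref{thm-linlift}, the existence of the symplectic lift, which the paper imports from Newman--Smart; after that, the proof reduces to routine sign bookkeeping with the $\langle\cdot\rangle$ gadget. It is worth noting that we do not need $\lambda$ itself to be a homomorphism, which is consistent with the paper's caution that for \emph{pre-}$\Delta$-encodings, ``symplectic'' only constrains $\phi$, not $\lambda$.
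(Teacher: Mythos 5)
Your proof follows the same route as the paper's: invoke \cref{thm-linlift} to get a symplectic lift $\phi$ of $\psi$, observe that $\phi - \phi_0$ has image in $dV'$ (so the two differ pointwise by a sign governed by the $\langle\cdot\rangle$ gadget), and correct $\lambda_0$ by the matching $\tfrac d2\langle\cdot\rangle$ term. The only cosmetic difference is the minus sign in your definition of $\lambda$, but since $\tfrac d2\langle\cdot\rangle_{\phi_0v,\phi v}$ is always $2$-torsion in $\tfrac12 R'$, your $-\tfrac d2\langle\cdot\rangle$ and the paper's $+\tfrac d2\langle\cdot\rangle$ coincide; the verification you carry out is correct and your explicit sign bookkeeping is slightly more detailed than the paper's.
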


As seen in \cref{sec-deltaenc}, if $(\lambda_0,\phi_0)$ is a $\Delta$-encoding, then it is always the case that $\psi\in\spgroup(V,\omega)$.

\begin{proof}
    By \cref{thm-linlift}, there is a linear map $\epsilon:\mathbb Z_{d'/d}^{2n}\to\mathbb Z_{d'/d}^{2n}$ such that $\phi=\phi_0+d\epsilon\in\spgroup(V',\omega')$. Then set:
    \begin{align*}
        \lambda v = \lambda_0v + \frac d2\langle\epsilon v\rangle_{\phi_0v,\phi v}
    \end{align*}

    Then $(\lambda_0,\phi_0)\sim(\lambda,\phi)$ and $\phi\in\spgroup(V',\omega')$.
\end{proof}

\begin{theorem}\label{thm-linenc}
    Each $\gamma\in\pcl_{d,n}$ has a symplectic $\Delta$-encoding.
\end{theorem}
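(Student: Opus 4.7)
The plan is to chain together the three lemmas summarized in \cref{fig:linenc-sketch}: first produce any $\Delta$-encoding of $\gamma$, then linearize it to make $\phi$ an $R'$-linear map, and finally adjust it by an element $d\epsilon$ to make $\phi$ symplectic.

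First, I would invoke \cref{lem-delta} to obtain a $\Delta$-encoding $(\lambda_0,\phi_0)$ of $\gamma$. To apply the lemma I need a function $\phi_0: V'\to V'$ with $\gamma(\Delta_v) = \zeta^{t_v}\Delta_{\phi_0 v}$ for some $t_v\in\tfrac12 R'$. Since $\gamma\in\mathrm{PCl}'_{d,n}$ sends $\mathrm{P}'_{d,n}$ to itself, each image $\gamma(\Delta_v)$ has the form $\zeta^s D_u$ for some $s\in\tfrac12R'$ and $u\in V$, and any lift of $u$ to $V'$ yields a valid $\phi_0 v$ (any resulting sign $(-1)^{\langle\epsilon\rangle}$ can be absorbed into $t_v$). \cref{lem-delta} then supplies the corresponding $\lambda_0$.

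Next, because $(\lambda_0,\phi_0)$ is a genuine $\Delta$-encoding of an automorphism rather than merely a pre-$\Delta$-encoding, the analysis of \cref{sec-deltaenc} applies: the injectivity of $\overline v\mapsto\{\pm\Delta_v\}$ forces $\overline{\phi_0}:V'\to V$ to be $R'$-linear. This is exactly the hypothesis of \cref{lem-prelin}, which produces an equivalent linear pre-$\Delta$-encoding $(\lambda_1,\phi_1)$. Since $\sim$ preserves the underlying relation, and the relation defined by $(\lambda_0,\phi_0)$ is the right-definite function $\gamma$, the linearized pair $(\lambda_1,\phi_1)$ is itself a $\Delta$-encoding of $\gamma$.

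Finally, the map $\psi=\overline{\overline{\phi_1}}:V\to V$ coincides with the symplectic map extracted from the $D$-encoding of $\gamma$ (as in \cref{psihomo}), so $\psi\in\mathrm{Sp}(V,\omega)$. Thus $(\lambda_1,\phi_1)$ satisfies the hypotheses of \cref{lem-linenc}, which produces an equivalent symplectic pre-$\Delta$-encoding $(\lambda,\phi)$. By the same preservation-of-relation argument, $(\lambda,\phi)$ is a $\Delta$-encoding of $\gamma$, and it is symplectic by construction.

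I expect the main obstacle to be largely conceptual rather than computational: namely, ensuring that the chain of $\sim$-equivalences legitimately transports the two properties defining a $\Delta$-encoding of $\gamma$ (right-definiteness and correctness of the representation) at each step, and that the $\psi$ appearing in the third step is indeed forced into $\mathrm{Sp}(V,\omega)$. Both facts follow from the work already done in \cref{sec-deltaenc}, so once the bookkeeping is in place the argument reduces to a one-line application of each lemma in turn.
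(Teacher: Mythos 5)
Your proof is correct and follows exactly the chain sketched in \cref{fig:linenc-sketch}, which is the paper's own argument: the paper's proof simply cites \cref{lem-delta,lem-prelin,lem-linenc} in sequence, while you have spelled out the bookkeeping (that $\overline{\phi_0}$ is $R'$-linear and $\psi\in\mathrm{Sp}(V,\omega)$ because the starting pair is a genuine $\Delta$-encoding, as observed after each lemma) that the paper leaves implicit.
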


\begin{proof}
    This follows directly from 
    \cref{lem-delta,lem-prelin,lem-linenc} applied to the case of $\Delta$-encodings.
\end{proof}

\section{Necessary and sufficient conditions for \(\Delta\)-encodings}\label{sec-neccsuff}

In \cref{sec-lemmas} we showed that one may obtain a symplectic encoding $(\lambda,\phi)$ for a given $\gamma\in\pcl_{d,n}$.
In this section we prove the converse: starting with a pair $(\lambda,\phi)$ where $\lambda:V'\to R$ is a homomorphism and $\phi\in\spgroup(V',\omega')$, \cref{thm-rdenough} establishes that $(\lambda,\phi)$ encodes some $\gamma\in\pcl_{d,n}$. We use this later to prove the condensed encoding bijection. \cref{thm-rdenough} was proved for single qu$d$it in \citet{App05}, but our proof is entirely different.

By \cref{lem-lambdacodomain}, if $(\lambda,\phi)$ is the $\Delta$-encoding of some projective Clifford, then the image of $\lambda$ lies in $R\leq\frac12R'$. We must of course assume that $\lambda$ is $R$-valued to prove the converse result, but it is helpful to see how things would break if $\lambda$ were allowed to take half-element values. For example, in the single qubit case, take $\lambda:V'\to\frac12R'$ to be the homomorphism extending $\lambda(0,1)=\frac12$ and $\lambda(1,0)=0$, and let $\phi=\mathrm{id}_{V'}$. Then  $(\lambda,\phi)$ is not the $\Delta$-encoding of any projective Clifford, for if it were, it would follow that:
\begin{align*}
    +I=\gamma(\Delta_{01}^2)=\left(\gamma(\Delta_{01})\right)^2=-I
\end{align*}

The following lemma offers a more specific explanation: the problem is that the relation $\Delta_v\mapsto\zeta^{\lambda v}\Delta_{\phi v}$ (for $v\in V'$) is not right-definite; if it were, it would automatically be a homomorphism.

\begin{lemma}\label{lem-rdfail}
    Let $\gamma:\pauligroup_{d,n}\to\pauligroup_{d,n}$ be a center-respecting function satisfying $\gamma(\Delta_v)=\zeta^{\lambda v}\Delta_{\phi v}$ for some homomorphism $\lambda:V'\to\frac12R'$ and $\phi\in\spgroup(V',\omega')$. Then $\gamma\in\pcl_{d,n}$.
\end{lemma}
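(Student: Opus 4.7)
The plan is to verify the three defining properties of $\mathrm{PCl}'_{d,n}$ in turn: that $\gamma$ fixes the center, that $\gamma$ is a group homomorphism, and that $\gamma$ is bijective. Each will follow directly from one of the hypotheses ($\lambda$ is a homomorphism, $\phi$ is linear, $\phi$ is symplectic, or $\gamma$ is center-respecting).

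First I would dispose of center-fixing. Because $\lambda$ and $\phi$ are linear, $\lambda 0 = 0$ and $\phi 0 = 0$, giving $\gamma(\Delta_0) = \gamma(I) = \zeta^0 \Delta_0 = I$. Combining this with the center-respecting hypothesis, $\gamma(\zeta^t) = \gamma(\zeta^t \Delta_0) = \zeta^t \gamma(\Delta_0) = \zeta^t$ for every $t \in \tfrac12 R'$, and $\langle\tau\rangle = \mathrm Z(\mathrm P'_{d,n})$ is pointwise fixed.

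Next I would verify that $\gamma$ is a homomorphism by a direct computation. Given $g = \zeta^{t_1} \Delta_{v_1}$ and $h = \zeta^{t_2}\Delta_{v_2}$, the group law for $\mathrm P'_{d,n}$ gives $gh = \zeta^{t_1+t_2+\frac12\omega'(v_1,v_2)}\Delta_{v_1+v_2}$, whence
\[\gamma(gh) = \zeta^{t_1+t_2+\frac12\omega'(v_1,v_2)+\lambda(v_1+v_2)}\Delta_{\phi(v_1+v_2)}.\]
On the other hand,
\[\gamma(g)\gamma(h) = \zeta^{t_1+t_2+\lambda v_1 + \lambda v_2 + \frac12\omega'(\phi v_1, \phi v_2)}\Delta_{\phi v_1 + \phi v_2}.\]
The phase exponents match because $\lambda$ is a homomorphism and $\phi$ preserves $\omega'$; the displacement labels match because $\phi$ is linear.

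Finally, I would establish bijectivity by producing a preimage of an arbitrary $g = \zeta^t \Delta_v$. Since $\phi \in \mathrm{Sp}(V',\omega')$, its inverse $\phi^{-1}$ exists, and applying the defining formula to $\zeta^{t-\lambda(\phi^{-1} v)}\Delta_{\phi^{-1} v}$ yields $g$ on the nose. A surjective endomorphism of the finite group $\mathrm P'_{d,n}$ is necessarily an automorphism, so we are done. The main point to keep an eye on is the non-uniqueness of the expressions $\zeta^t\Delta_v$: the hypothesis that $\gamma$ is a genuine \emph{function} already builds in right-definiteness of the relation $\Delta_v \mapsto \zeta^{\lambda v}\Delta_{\phi v}$, so none of the computations above depend on a particular lift. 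Once this is granted, the remaining work is exponent-matching driven by linearity of $\lambda$ and $\phi$ and the symplectic property.
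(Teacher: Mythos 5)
Your proof is correct and follows essentially the same route as the paper: show $\gamma$ is an endomorphism by an exponent-matching computation driven by linearity of $\lambda$ and $\phi\in\mathrm{Sp}(V',\omega')$, then get bijectivity from surjectivity of $\phi$ plus finiteness of $\mathrm P'_{d,n}$. You are a bit more explicit in two places the paper leaves implicit --- verifying directly that $\gamma$ fixes the center (via $\gamma(\Delta_0)=I$ and the center-respecting hypothesis) and exhibiting an explicit preimage via $\phi^{-1}$ --- but neither changes the substance of the argument.
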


\begin{proof}
    Since $\gamma$ respects the center, the following calculation shows $\gamma$ is an endomorphism:
    \begin{align*}
        \gamma(\Delta_{v_1})\gamma(\Delta_{v_2})=\zeta^{\lambda v_1+\lambda v_2+\frac12\omega'(\phi v_1,\phi v_2)}\Delta_{\phi v_1+\phi v_2}=\zeta^{\lambda(v_1+v_2)+\frac12\omega'(v_1,v_2)}\Delta_{\phi(v_1+v_2)}=\gamma(\Delta_{v_1}\Delta_{v_2})
    \end{align*}

    Since $\phi$ is surjective and elements of $\pauligroup_{d,n}$ have the form $\zeta^t\Delta_v$, it follows that $\gamma:\pauligroup_{d,n}\to\pauligroup_{d,n}$ is surjective. Finally, since $\pauligroup_{d,n}$ is finite, $\gamma$ is a bijection, hence a projective Clifford.
\end{proof}

\begin{lemma}\label{lem-rdenoughmainlem}
    Let $\lambda:V'\to R$ be a homomorphism and let $\phi\in\spgroup(V',\omega')$. Then the relation $\gamma$ given by $\Delta_v\mapsto\zeta^{\lambda v}\Delta_{\phi v}$ is right definite, so extends to a center-respecting function $\pauligroup_{d,n}\to\pauligroup_{d,n}$.
\end{lemma}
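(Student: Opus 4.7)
The task is to show that the relation $\gamma: \Delta_v \mapsto \zeta^{\lambda v}\Delta_{\phi v}$ is right-definite, \emph{i.e.} that whenever $\Delta_{v_1} = \Delta_{v_2}$ we have $\zeta^{\lambda v_1}\Delta_{\phi v_1} = \zeta^{\lambda v_2}\Delta_{\phi v_2}$. Once this is done, the relation extends to a center-respecting function on all of $\mathrm P'_{d,n}$ by sending $\zeta^t\Delta_v\mapsto\zeta^{t+\lambda v}\Delta_{\phi v}$, since right-definiteness is precisely what makes this rule well-defined on the non-unique expressions $\zeta^t\Delta_v$ (as discussed at the start of \cref{sec-lemmas}).

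\textbf{Step 1: characterize when $\Delta_{v_1} = \Delta_{v_2}$.} From the definition $\Delta_v = \tau^{q\cdot p}D_{\overline v}$ together with the identity $\Delta_{v+d\epsilon} = (-1)^{\langle\epsilon\rangle_v}\Delta_v$ from \cref{sec-deltaenc}, we see $\Delta_{v_1} = \Delta_{v_2}$ iff there exists $\epsilon\in\mathbb Z_{d'/d}^{2n}$ with $v_2 = v_1 + d\epsilon$ and $\langle\epsilon\rangle_{v_1,v_2} = 0$. This is trivial when $d$ is odd (the target group is trivial), so the content is in the even case.

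\textbf{Step 2: the phase matches.} Fix such an $\epsilon$ and any lift $\tilde\epsilon\in V'$ with entries in $\{0,1\}$. Since $\lambda$ is a homomorphism, $\lambda v_2 - \lambda v_1 = d\lambda\tilde\epsilon$. But $\lambda$ is valued in $R = \mathbb Z_d$, where multiplication by $d$ is zero, so $\lambda v_1 = \lambda v_2$ and hence $\zeta^{\lambda v_1} = \zeta^{\lambda v_2}$.

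\textbf{Step 3: the images agree.} By $R'$-linearity of $\phi$, $\phi v_2 = \phi v_1 + d\phi\tilde\epsilon$, so the entries of $\phi v_2 - \phi v_1$ are multiples of $d$; let $\delta\in\mathbb Z_{d'/d}^{2n}$ be the unique element with $d\delta \equiv d\phi\tilde\epsilon\pmod{d'}$. Then $\Delta_{\phi v_2} = (-1)^{\langle\delta\rangle_{\phi v_1,\phi v_2}}\Delta_{\phi v_1}$, and the proof reduces to showing $\langle\delta\rangle_{\phi v_1,\phi v_2} = 0$. Here I would invoke the identity $\langle\epsilon\rangle_{u_1,u_2} = \tfrac1d\omega'(u_1,u_2)$ noted just after \cref{eqn-computeeps} together with $\phi\in\mathrm{Sp}(V',\omega')$:
\begin{align*}
\langle\delta\rangle_{\phi v_1,\phi v_2} = \tfrac1d\omega'(\phi v_1,\phi v_2) = \tfrac1d\omega'(v_1,v_2) = \langle\epsilon\rangle_{v_1,v_2} = 0.
\end{align*}
Combined with Step 2 this gives the desired equality $\zeta^{\lambda v_1}\Delta_{\phi v_1} = \zeta^{\lambda v_2}\Delta_{\phi v_2}$.

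\textbf{Main obstacle.} The steps themselves are short, but the real point, and what I expect to require the most care, is Step 3: the bookkeeping around the $\langle\cdot\rangle$ gadget. The slick way is exactly the reinterpretation $\langle\epsilon\rangle_{v_1,v_2} = \tfrac1d\omega'(v_1,v_2)$, which lets the symplectic condition on $\phi$ do all the work in one line. Without this reformulation one would be forced into a direct componentwise calculation using $\langle\delta\rangle_{u_1,u_2} = \overline{q}_1\cdot\delta_{\mathrm p} + \overline{p}_1\cdot\delta_{\mathrm q}$ together with the matrix for $\phi$, which is more tedious and obscures why symplecticness is the correct hypothesis.
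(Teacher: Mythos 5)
Your proof is correct and takes a genuinely different route from the paper's. The paper proves right-definiteness indirectly: it forms the converse relation $\gamma^{-1}$ on all of $\mathrm P'_{d,n}$, verifies that $\gamma^{-1}$ is a homomorphic relation, and then invokes the abstract kernel criterion (\cref{lem-injlem}) to show $\gamma^{-1}$ is left-definite by computing $\ker\gamma^{-1} = \{I\}$. You instead verify right-definiteness head-on, by characterizing when $\Delta_{v_1}=\Delta_{v_2}$ and then checking that the output phase and operator agree, with the decisive step being the reformulation $\langle\epsilon\rangle_{v_1,v_2}=\tfrac1d\omega'(v_1,v_2)$ plus $\omega'$-invariance of $\phi$. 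Both are valid; the paper's version has the bonus of exhibiting $\gamma^{-1}$ as a homomorphic relation, which it reuses in the alternate proof of \cref{thm-rdenough}, while your version is more elementary and makes the role of symplecticity transparent. One small presentation point: the lemma really concerns the center-respecting total relation on $\mathrm P'_{d,n}$ (i.e., the pairs $(\zeta^t\Delta_v,\zeta^{t+\lambda v}\Delta_{\phi v})$ for all $t,v$, as in the definition of pre-$\Delta$-encodings), so right-definiteness has to cover $\zeta^{t_1}\Delta_{v_1}=\zeta^{t_2}\Delta_{v_2}$ with $\Delta_{v_1}=-\Delta_{v_2}$ as well, not only $\Delta_{v_1}=\Delta_{v_2}$. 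Your Step~1 restricts to $\langle\epsilon\rangle_{v_1,v_2}=0$, but your Step~3 chain in fact proves the stronger identity $\langle\delta\rangle_{\phi v_1,\phi v_2}=\langle\epsilon\rangle_{v_1,v_2}$ unconditionally (equivalently \cref{eqn-rdcheck}), which is exactly what makes the sign in $\zeta^{t_1}=-\zeta^{t_2}$ cancel with the sign in $\Delta_{\phi v_1}=-\Delta_{\phi v_2}$; so you should state that identity as the conclusion of Step~3 rather than specializing to $=0$, and then the extension to a center-respecting function is immediate rather than asserted.
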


\begin{proof}
    Consider the converse relation of the extended relation:
    \begin{align*}
        \gamma^{-1} = \left\{(\zeta^{t+\lambda v}\Delta_{\phi v},\zeta^t\Delta_v)\,\left\lvert\,t\in\frac12R'\textrm{ and }v\in V'\right.\right\}
    \end{align*}
    Then $\gamma^{-1}$ is a homomorphic relation: ($i$) $(I,I)\in\gamma^{-1}$; ($ii$) the inverse of $(\zeta^{t+\lambda v}\Delta_{\phi v},\zeta^t\Delta_v)$ is $(\zeta^{-t-\lambda v}\Delta_{-\phi v},\zeta^{-t}\Delta_{-v})\in\gamma^{-1}$; and ($iii$) if $(\zeta^{t_1+\lambda v_1}\Delta_{\phi v_1},\zeta^{t_1}\Delta_{v_1})\in\gamma^{-1}$ and $(\zeta^{t_2+\lambda v_2}\Delta_{\phi v_2}, \allowbreak \zeta^{t_2}\Delta_{v_2})\in\gamma^{-1}$, then their product is:
    \begin{align*}
        \,(\zeta^{t_3+\lambda v_3}\Delta_{\phi v_3},\zeta^{t_3}\Delta_{v_3})&=(\zeta^{t_1+t_2+\frac12\omega'(\phi v_1,\phi v_2)+\lambda v_1+\lambda v_2}\Delta_{\phi v_1+\phi v_2},\zeta^{t_1+t_2+\frac12\omega'(v_1,v_2)}\Delta_{v_1+v_2})\\
        &=\,(\zeta^{t_1+t_2+\frac12\omega'(v_1,v_2)+\lambda(v_1+v_2)}\Delta_{\phi(v_1+v_2)},\zeta^{t_1+t_2+\frac12\omega'(v_1,v_2)}\Delta_{v_1+v_2})\in\gamma^{-1}
    \end{align*}

    Now, to see that $\gamma$ is right-definite, we show that $\gamma^{-1}$ is left-definite using \cref{lem-injlem}. Let $x\in\mathrm{ker\,}\gamma^{-1}$. Then $x=\zeta^{t+\lambda v}\Delta_{\phi v}$ for some $t\in\frac12R'$ and $v\in V'$ such that $\zeta^t\Delta_v=I$. This implies that $v=d\epsilon$ for some $\epsilon\in\mathbb Z_{d'/d}^{2n}$ and $t=\frac d2\langle\epsilon\rangle_0=0$. Then $x=\zeta^{\lambda d\epsilon}\Delta_{\phi d\epsilon}=\zeta^{d\lambda\epsilon}\Delta_{d\phi\epsilon}$ since $\lambda$ and $\phi$ are homomorphisms. Since $\lambda$ is valued in $R\leq\frac12R'$, $\lambda d\epsilon=0$, so $x=\Delta_{d\phi\epsilon}=\zeta^{\frac d2\langle\phi\epsilon\rangle_0}I=I$. So $\mathrm{ker\,}\gamma^{-1}$ is trivial.
\end{proof}

\begin{theorem}\label{thm-rdenough}
    Let $\lambda:V'\to R$ be a homomorphism and let $\phi\in\spgroup(V',\omega')$. Then the relation $\Delta_v\mapsto\zeta^{\lambda v}\Delta_{\phi v}$ is right-definite and extends to a projective Clifford $\gamma\in\pcl_{d,n}$.
\end{theorem}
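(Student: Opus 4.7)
The plan is to chain together the two lemmas immediately preceding the statement, since they have been set up precisely for this combination. First I would invoke \cref{lem-rdenoughmainlem} on the given data $(\lambda,\phi)$: the hypothesis that $\lambda:V'\to R$ is a homomorphism (rather than merely $\frac12R'$-valued) and that $\phi\in\mathrm{Sp}(V',\omega')$ is exactly what that lemma assumes. Its conclusion gives me right-definiteness of the relation $\Delta_v\mapsto\zeta^{\lambda v}\Delta_{\phi v}$, and therefore extends this relation to a bona fide center-respecting function $\gamma:\mathrm P'_{d,n}\to\mathrm P'_{d,n}$ (center-respecting because the defining relation prescribes $\gamma(\zeta^t\Delta_v)=\zeta^{t+\lambda v}\Delta_{\phi v}$).

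Next I would apply \cref{lem-rdfail} to this $\gamma$. Once again the hypotheses match exactly: $\gamma$ is center-respecting, is defined by a pair $(\lambda,\phi)$ with $\lambda$ a homomorphism (into $R\leq\tfrac12R'$) and $\phi$ symplectic. The conclusion is that $\gamma\in\mathrm{PCl}'_{d,n}$, which is the claim.

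I do not expect any real obstacle here, since the work has been front-loaded into the two lemmas. The only small point worth flagging is that \cref{lem-rdfail} is stated for $\lambda$ valued in $\tfrac12R'$, while our hypothesis gives $\lambda$ valued in $R$; this is a strengthening of the hypothesis rather than a weakening, so the lemma applies verbatim. Thus the entire proof amounts to citing the two lemmas in sequence, with perhaps one sentence noting that the center-respecting function produced by \cref{lem-rdenoughmainlem} is indeed the input required by \cref{lem-rdfail}.
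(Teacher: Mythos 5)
Your proof matches the paper's own argument exactly: the paper likewise states that the theorem "follows from \cref{lem-rdfail,lem-rdenoughmainlem}," applying \cref{lem-rdenoughmainlem} to obtain right-definiteness and hence a center-respecting function $\gamma$, and then \cref{lem-rdfail} to conclude $\gamma\in\mathrm{PCl}'_{d,n}$. Your remark about the $R$-valuedness of $\lambda$ strengthening rather than weakening the hypothesis of \cref{lem-rdfail} is correct and exactly the right thing to check.
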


\begin{proof}
This follows from \cref{lem-rdfail,lem-rdenoughmainlem}. Alternately, first use \cref{lem-rdenoughmainlem} to establish that $\gamma$ is a function. Next (as in the proof of \cref{lem-rdfail}), $\gamma$ is bijective. Then the converse relation $\gamma^{-1}$ is just the inverse function, and since it is a homomorphic relation, it is a homomorphism. Finally $\gamma$ is a homomorphism since its inverse is.
\end{proof}

We now arrive at a necessary and sufficient condition for $(\lambda,\phi)$ to determine a projective Clifford.

\begin{corollary}\label{cor-neccsuff}
    Let $\lambda:V'\to R$ and $\phi:V'\to V'$ be a pair of functions whose relation $\Delta_v\mapsto\zeta^{\lambda v}\Delta_{\phi v}$ is right-definite, or equivalently, whose relation extends to a center-respecting function $\gamma:\pauligroup_{d,n}\to\pauligroup_{d,n}$. Also let $\lambda_0:V\to R$ be the restriction of $\lambda$ to $V\subseteq V'$. Then $\gamma$ is projective Clifford if and only if $\lambda_0$ is a homomorphism, $\overline\phi:V'\to V$ is $R'$-linear, and furthermore, the map $\psi = \overline{\overline{\phi}} :V\to V$ obtained from $\phi$ by reducing the output and input of $\phi$ mod $d$, (which is well-defined since $\overline\phi$ is linear) is a symplectomorphism, $\psi\in\spgroup(V,\omega)$.
\end{corollary}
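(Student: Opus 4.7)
The plan is to prove the two directions separately: the forward direction leverages the existence of a symplectic $\Delta$-encoding (\cref{thm-linenc}) to transfer structure back onto the given $(\lambda,\phi)$, while the backward direction refines $(\lambda,\phi)$ into a symplectic encoding using \cref{lem-prelin,lem-linenc} and then appeals to \cref{thm-rdenough}.

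For the $(\Rightarrow)$ direction, assume $\gamma\in\mathrm{PCl}'_{d,n}$. Linearity of $\overline\phi$ is the observation made in \cref{sec-deltaenc} after \cref{deltaenc-scalarprop}: because the assignment $\overline v\mapsto\{\pm\Delta_v\}$ on $V$ is a well-defined injection and $\gamma$ is a homomorphism, $\overline\phi$ must be $R'$-linear. The induced $\psi=\overline{\overline\phi}$ then coincides with the symplectomorphism from the $D$-encoding of $\gamma$, so $\psi\in\mathrm{Sp}(V,\omega)$ by \cref{psihomo}. For the condition on $\lambda_0$, invoke \cref{thm-linenc} to produce an equivalent symplectic $\Delta$-encoding $(\lambda',\phi')$ with $\lambda'$ a homomorphism; since $\phi v$ and $\phi'v$ share the reduction $\psi\overline v$, write $\phi v=\phi'v+d\eta_v$, so that $\lambda v=\lambda'v+\tfrac d2\langle\eta_v\rangle$. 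Restricting to canonical representatives in $V\subseteq V'$, the $R$-valuedness of $\lambda'$ absorbs the carry $d\alpha$ that arises when the sum in $V'$ escapes $V$, and $v\mapsto\langle\eta_v\rangle$ is checked to be additive using $R'$-linearity of $\phi'$, yielding that $\lambda_0$ is a homomorphism.

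For the $(\Leftarrow)$ direction, assume the three conditions. Applying \cref{lem-prelin} to $(\lambda,\phi)$ yields an equivalent linear pre-$\Delta$-encoding $(\lambda_1,\phi_1)$; applying \cref{lem-linenc} then yields an equivalent symplectic pre-$\Delta$-encoding $(\lambda_2,\phi_2)$ with $\phi_2\in\mathrm{Sp}(V',\omega')$. Since equivalence preserves the underlying relation and each correction $\tfrac d2\langle\cdot\rangle$ lies in $R$, the resulting $\lambda_2$ remains $R$-valued. It remains to show $\lambda_2$ is a homomorphism; once this is established, \cref{thm-rdenough} gives that the relation is projective Clifford. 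Using the explicit formulas $\lambda_1v=\lambda v+\tfrac d2\langle\delta\rangle_{\phi_1v,\phi v}$ and $\lambda_2v=\lambda_1v+\tfrac d2\langle\epsilon v\rangle_{\phi_1v,\phi_2v}$ from the proofs of the refinement lemmas, together with the $R'$-linearity of $\phi_1,\phi_2$ and hypothesis (1), the additivity check for $\lambda_2$ on $V'$ reduces to the additivity of $\lambda_0$ on $V$.

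The main obstacle in both directions is the sign bookkeeping caused by the multi-valuedness $\Delta_v=(-1)^{\langle\epsilon\rangle_{v_0}}\Delta_{v_0}$ and by the fact that the inclusion $V\subseteq V'$ is only a set-theoretic section of the reduction map, not a group homomorphism. Every additivity argument must track carry terms of the form $d\alpha$ and the $(-1)^{\langle\cdot\rangle}$ factors they produce; showing that all such signs cancel precisely under conditions (1)--(3) is the technical heart of the proof.
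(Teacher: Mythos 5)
Your overall architecture matches the paper's: forward direction via the preliminary observations of \cref{sec-deltaenc}, backward direction by refining $(\lambda,\phi)$ through \cref{lem-prelin,lem-linenc} to a symplectic encoding and then invoking \cref{thm-rdenough}. The genuine divergence is in how each argues that the refined $\lambda$ is a homomorphism. You propose to unwind the explicit correction formulas $\lambda_1 v=\lambda v+\tfrac d2\langle\delta\rangle_{\phi_1v,\phi v}$ and $\lambda_2 v=\lambda_1v+\tfrac d2\langle\epsilon v\rangle_{\phi_1v,\phi_2v}$ and ``check additivity'' directly. The paper instead appeals to the \emph{uniqueness} clause of \cref{lem-lambda0extends}: since $\lambda_0\circ\pi$ (where $\pi\colon V'\to V$ is reduction mod $d$) is a homomorphism extending $\lambda_0$, and with $\phi\in\mathrm{Sp}(V',\omega')$ \cref{lem-rdenoughmainlem} makes $(\lambda_0\circ\pi,\phi)$ right-definite, uniqueness forces $\lambda=\lambda_0\circ\pi$, a homomorphism. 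This sidesteps the correction-term bookkeeping entirely, so it is a strictly cleaner route than the one you sketch.

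Your direct-verification route, as written, has a gap you should not brush over. You assert that ``$v\mapsto\langle\eta_v\rangle$ is checked to be additive using $R'$-linearity of $\phi'$'' (forward direction) and that ``the additivity check for $\lambda_2$ on $V'$ reduces to the additivity of $\lambda_0$ on $V$'' (backward direction). In both cases the correction $\langle\delta\rangle_{\phi_1 v,\phi v}$ depends on the \emph{original} $\phi$, which is an arbitrary function -- only $\overline{\phi}$ is assumed linear -- so nothing about $\phi'$ or $\phi_1,\phi_2$ controls it. Without a concrete argument showing how the carry terms from the set-theoretic section $\underline{\,\cdot\,}\colon V\to V'$ cancel against the $\langle\cdot\rangle$ factors, the ``is checked to be'' steps are assertions, not proofs. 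If you want to argue by direct computation you must actually carry it out; otherwise, adopt the paper's uniqueness appeal, which both directions can share: once $\phi$ is symplectic, $\lambda$ is the unique right-definite extension of $\lambda_0$, and the homomorphism $\lambda_0\circ\pi$ is such an extension by \cref{lem-rdenoughmainlem}, so $\lambda=\lambda_0\circ\pi$.
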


\begin{proof}
    If either of $\lambda_0$ or $\overline\phi$ is not linear or $\psi$ is not a symplectomorphism, then by preliminary observations in \cref{sec-deltaenc}, $\gamma\notin\pcl_{d,n}$, so assume they are linear and $\psi\in\spgroup(V,\omega)$. Then using the full strength of \cref{lem-prelin,lem-linenc}, we may assume without loss of generality that $\phi\in\spgroup(V,\omega)$. Then since $\lambda_0$ is a homomorphism extended by a function $\lambda$, by the ``uniqueness'' part of \cref{lem-lambda0extends}, $\lambda$ is a homomorphism. So by \cref{thm-rdenough}, $\gamma\in\pcl_{d,n}$.
\end{proof}


Lastly, if $\phi$ is linear, then right-definiteness of the relation $\Delta_v\mapsto\zeta^{\lambda v}\Delta_{\phi v}$ depends on $\phi$ and not $\lambda$. Indeed, if $\phi$ is linear, then the relation is right-definite if and only if
\begin{align}\label{eqn-rdcheck}
    \langle\epsilon\rangle_v=\langle\phi\epsilon\rangle_{\phi v}
\end{align}
for all $v\in V'$ and $\epsilon\in\mathbb Z_{d'/d}^{2n}$.

\section{Structure of the projective Clifford group and cohomology}\label{sec-structurethm}

We now present the Structure Theorem for the projective Clifford group (\cref{thm-structure}), which provides a group isomoprhism $\spgroup(V,\omega)\cong\pcl_{d,n}/V^*$. The $d=2$ case was shown previously by \citet{bolt1961clifford}.


Given a $\Delta$-encoding $(\lambda,\phi)$ of some $\gamma\in\pcl_{d,n}$, let us write $\psi=\overline{\overline\phi}\in\spgroup(V,\omega)$, as in \cref{sec-deltaenc}.

\begin{theorem}\label{thm-structure}
    The function $\gamma\mapsto\psi$ is a well-defined surjective group homomorphism $\pi:\pcl_{d,n}\to\spgroup(V,\omega)$ with kernel $i:V^*\to\pcl_{d,n}$ identifying $\mu\in V^*$ with $(\lambda,\mathrm{id}_{V'})$ for the unique $\lambda\in(V')^*$ such that $\lambda v=\mu\overline v$ for all $v\in V'$.
    \begin{equation*}
        \begin{tikzcd}
        1\ar[r]&V^*\ar[r,"i"]&\pcl_{d,n}\ar[r,"{\pi}"]&\spgroup(V,\omega)\ar[r]&1
        \end{tikzcd}
    \end{equation*}
\end{theorem}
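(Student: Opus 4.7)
The plan is to check, in turn, well-definedness of $\pi$, the homomorphism property, surjectivity, and an explicit description of the kernel as the image of $i$. Throughout I will use the results already accumulated: existence of $\Delta$-encodings (\cref{lem-delta}), refinement to linear/symplectic ones (\cref{lem-prelin,lem-linenc}), the necessary/sufficient condition (\cref{cor-neccsuff,thm-rdenough}), and the symplectic lifting (\cref{thm-linlift}).

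For well-definedness, if $(\lambda_1,\phi_1)$ and $(\lambda_2,\phi_2)$ are two $\Delta$-encodings of the same $\gamma$, then on each $v\in V$ we have $\gamma(D_v)=\zeta^{r v}D_{\psi_i v}$ for the (unique) $D$-encoding, and \cref{sec-deltaenc} identifies $\psi_i=\overline{\overline{\phi_i}}$ with the $\psi$ of that $D$-encoding; hence $\psi_1=\psi_2$. For the homomorphism property, if $\gamma_j$ has encoding $(\lambda_j,\phi_j)$, a direct computation gives
\begin{align*}
(\gamma_1\circ\gamma_2)(\Delta_v)
=\gamma_1\bigl(\zeta^{\lambda_2 v}\Delta_{\phi_2 v}\bigr)
=\zeta^{\lambda_2 v+\lambda_1\phi_2 v}\,\Delta_{\phi_1\phi_2 v},
\end{align*}
so $(\lambda_2+\lambda_1\circ\phi_2,\phi_1\circ\phi_2)$ is a $\Delta$-encoding of $\gamma_1\gamma_2$, and reducing twice mod $d$ yields $\pi(\gamma_1\gamma_2)=\psi_1\psi_2=\pi(\gamma_1)\pi(\gamma_2)$.

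Surjectivity is immediate from the machinery already in place: given $\psi\in\mathrm{Sp}(V,\omega)$, \cref{thm-linlift} produces a lift $\phi\in\mathrm{Sp}(V',\omega')$, and then the pair $(0,\phi)$ satisfies the hypotheses of \cref{thm-rdenough}, so it encodes a projective Clifford $\gamma$ with $\pi(\gamma)=\psi$.

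The main obstacle is showing $\ker\pi\subseteq\mathrm{im}\,i$; the reverse inclusion is obvious because $i(\mu)$ is defined with $\phi=\mathrm{id}_{V'}$, whence $\psi=\mathrm{id}_V$. For the nontrivial direction, suppose $\gamma\in\ker\pi$ and pick any $\Delta$-encoding $(\lambda,\phi)$. Since $\overline{\overline{\phi}}=\mathrm{id}_V$, for each $v\in V'$ we have $\phi v=v+d\epsilon_v$ for some $\epsilon_v\in\mathbb Z_{d'/d}^{2n}$, so $\Delta_{\phi v}=(-1)^{\langle\epsilon_v\rangle_v}\Delta_v$. Absorbing the sign into the phase, the pair $(\lambda',\mathrm{id}_{V'})$ with $\lambda' v=\lambda v+\tfrac d2\langle\epsilon_v\rangle_v$ is then another $\Delta$-encoding of the \emph{same} $\gamma$. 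Since $\phi=\mathrm{id}_{V'}\in\mathrm{Sp}(V',\omega')$, the encoding $(\lambda',\mathrm{id}_{V'})$ is symplectic, so $\lambda'$ is a homomorphism valued in $R$ by \cref{lem-lambdacodomain}. Because $\lambda'$ is $R$-valued and additive, it vanishes on $dV'\subseteq V'$, hence factors uniquely through reduction mod $d$ as $\lambda' v=\mu\overline v$ for some $\mu\in V^*$; this gives $\gamma=i(\mu)$. Injectivity of $i$ is then clear: the value $\mu\overline{e_i}$ can be read off from $i(\mu)(\Delta_{e_i})=\zeta^{\mu e_i}\Delta_{e_i}$ for each standard basis vector $e_i\in V\subseteq V'$, so $i(\mu_1)=i(\mu_2)$ forces $\mu_1=\mu_2$, and the short exact sequence is established.
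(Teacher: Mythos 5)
Your proof is correct and follows essentially the same route as the paper: well-definedness via reduction mod $d$ (you route it through the uniqueness of $D$-encodings; the paper notes directly that $\phi_1 v - \phi_2 v \in dV'$, but these are the same observation), the same composition formula for multiplicativity, surjectivity via \cref{thm-linlift} plus \cref{thm-rdenough}, and the kernel computed by absorbing the sign $(-1)^{\langle\epsilon\rangle}$ into the phase to obtain an encoding $(\lambda',\mathrm{id}_{V'})$ with $\lambda'$ an $R$-valued homomorphism that factors through $V$. The only thing you leave implicit is that for the inclusion $\mathrm{im}\,i\subseteq\ker\pi$ one should confirm via \cref{thm-rdenough} (or directly) that the pair $(\lambda,\mathrm{id}_{V'})$ attached to an arbitrary $\mu\in V^*$ really does encode a projective Clifford, so that $i$ is well-defined.
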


\begin{proof}
    If $(\lambda_1,\phi_1)$ and $(\lambda_2,\phi_2)$ are $\Delta$-encodings of $\gamma\in\pcl_{d,n}$, then for each $v\in V'$ there exists $\epsilon\in\mathbb Z_{d'/d}^{2n}$ such that $\phi_1v=\phi_2v+d\epsilon$, so $\gamma\mapsto\psi$ is well-defined. Now if $\gamma$ and $\gamma'$ have $\Delta$-encodings $(\lambda,\phi)$ and $(\lambda',\phi')$ respectively, then their composition acts by:
    \begin{align}
        \gamma\gamma'(\Delta_v)=\zeta^{\lambda'v+\lambda\phi'v}\Delta_{\phi\phi'v}
    \end{align}

    Reading subscripts mod $d$, we see that $\gamma\mapsto\psi$ is multiplicative. 
    It follows that $\gamma\mapsto\psi$ is a group homomorphism $\pi:\pcl_{d,n}\to\spgroup(V,\omega)$. To see that $\pi$ is surjective, by \cref{thm-linlift}, lift an arbitrary $\psi\in\spgroup(V,\omega)$ to some $\phi\in\spgroup(V',\omega')$. By \cref{thm-rdenough}, the relation $\Delta_v\mapsto\Delta_{\phi v}$ is right-definite and extends to a projective Clifford, and since it has $\Delta$-encoding $(0,\phi)$, it follows that $\pi\gamma=\psi$.

    We now describe the kernel. Every $\gamma\in\mathrm{ker\,}\pi$ has a $\Delta$-encoding of the form $(\lambda,\mathrm{id}_{V'})$, since by \cref{lem-delta} it has a $\Delta$-encoding $(\lambda_0,\phi_0)$, and then $(\lambda_0,\phi_0)\sim(\lambda,\mathrm{id}_{V'})$ where $\lambda=\lambda_0+\frac d2\langle\epsilon\rangle_{\phi_0v,v}$. Since $\mathrm{id}_{V'}\in\spgroup(V',\omega')$, it follows that $\lambda$ is a homomorphism. It follows that there exists unique $\mu\in V^*$ such that $\lambda v=\mu\overline v$ for all $v\in V'$. Conversely, if $\lambda$ corresponds in this way to an arbitrary $\mu\in V^*$, then $(\lambda,\mathrm{id}_{V^*})$ is a $\Delta$-encoding of some $\gamma\in\pcl_{d,n}$. This can be seen directly or by \cref{thm-rdenough}.
\end{proof}

By Lagrange's Theorem, $|\pcl_{d,n}|=|V^*||\spgroup(V,\omega)|$. Recall that one proves Lagrange's Theorem by constructing a bijection between the subgroup in question and an arbitrary coset. In order to obtain a bijection between the sets $\pcl_{d,n}$ and $V^*\times\spgroup(V,\omega)$, one must choose a section $s:\spgroup(V,\omega)\to\pcl_{d,n}$ of the quotient map $\pi$. The standard basis on $V=\mathbb Z_d^{2n}$ provides a canonical choice. Namely, one lets the entries of $s(\psi)$ (in $R'$) be the inclusions of the matrix entries for $\psi$ (in $R$) via the inclusion function $\underline{\,\cdot\,}:R\to R'$, or in abbreviation, $s=\underline{\,\cdot\,}$. More generally, any (setwise) section $s:G\to B$ of a surjective group homomorphism $\pi:B\to G$ with kernel $A$ yields a bijection:
\begin{align}\label{sectioniso}
    B=\bigsqcup_{g\in G}As(g)&\to A\times C\\
    as(g)&\mapsto(a,g)
\end{align}

This leaves something to be desired, however, since we would like a formula for the action of a projective Clifford $\gamma\in\pcl_{d,n}$, encoded as a pair $(\mu,\psi)\in V^*\times\spgroup(V,\omega)$, on a Weyl operator $\Delta_v$. Group cohomology gets us there in an abstract sense, but more steps are required (shown in \cref{sec-recipe}) to carry out the calculation concretely. For reference, we briefly summarize the relevant cohomological technique, which is outlined in \citet{conrad2018group} and is applied to this example in \citet{raussendorf2023role}.

Suppose $A$ is a right $G$-module. This means $A$ is an abelian group equipped with a right action $\cdot:A\times G\to A$. In our application, this is just composition (pullback): $\psi\cdot\mu=\psi\mu$. Then in terms of \cref{sectioniso}, the group law has the form
\begin{align*}
    (a_1,g_1)(a_2,g_2) = (a_1+a_2\cdot g_1+c(g_1,g_2),g_1g_2)
\end{align*}

and one may verify that $c:G\times G\to A$ is a group 2-cocycle. A different section $s'$ results in a different 2-cocycle $c'$, but always of the same cohomology class $[c]$. Conversely, each 2-cocycle $c$ yields a group extension $A\to B\to G$ whose isomorphism class is uniquely determined by $[c]$. We calculate $c(\psi_1,\psi_2)$ in the case of $V^*\to\pcl_{d,n}\to\spgroup(V,\omega)$ in the case that $s=\underline{\,\cdot\,}$ in \cref{sec-recipe}.


\section{Condensed \(\Delta\)-encodings}\label{sec-recipe}

We give a constructive proof of existence and uniqueness of condensed encodings, and then derive the formulas for their evaluation, composition, and inversion. In this section and the next, we will frequently use the inclusion function $\underline{\,\cdot\,}:V\to V'$ or $R\to R'$, which unlike the inclusion $R\leq\frac12R'$ (which we do not indicate notationally with an underline or otherwise), is not a homomorphism when $d$ is even. We also make frequent use of the reduction mod $d$ homomorphism $\overline{\,\cdot\,}:V'\to V$ or $R'\to R$. The relevant property is that $\underline{\,\cdot\,}$ is a setwise section of $\overline{\,\cdot\,}$.

We use the term \textit{extended encoding} to mean \textit{$\Delta$-encoding} in this section since both the condensed and extended encoding involve the Weyl operators $\Delta_v$.

\begin{theorem}\label{thm-denseenc}
    There is a bijection $\pcl_{d,n}\to V^*\times\spgroup(V,\omega)$ such that for any symplectic extended encoding $(\lambda,\phi)$ of $\gamma\in\pcl_{d,n}$, the corresponding pair $(\mu,\psi)\in V^*\times\spgroup(V,\omega)$ uniquely satisfies:
    \begin{align}\label{eqn-bijfundamentals}
    \psi=\overline{\overline\phi} && \textrm{and} && \lambda\underline b-\mu b=\frac d2\langle\epsilon\rangle_{\underline{\psi b},\phi\underline b}
    \end{align}

    for all standard basis vectors $b\in V$.
\end{theorem}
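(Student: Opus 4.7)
The plan is to define a forward map $F:\gamma\mapsto(\mu,\psi)$ by applying the displayed formulas to any symplectic $\Delta$-encoding of $\gamma$ (which exists by \cref{thm-linenc}), verify it is well-defined, establish surjectivity by constructing preimages, and conclude bijectivity from cardinality.

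For each $\gamma$, I pick a symplectic encoding $(\lambda,\phi)$ and set $\psi=\overline{\overline\phi}$. This lies in $\mathrm{Sp}(V,\omega)$ and depends only on $\gamma$ by \cref{thm-structure}. For each standard basis vector $b\in V$, since $\overline{\phi\underline b}=\psi b$, there is a unique $\epsilon\in\mathbb Z_{d'/d}^{2n}$ with $\phi\underline b=\underline{\psi b}+d\epsilon$; set $\mu b=\lambda\underline b-\frac d2\langle\epsilon\rangle_{\underline{\psi b},\phi\underline b}$. The right-hand side lies in $R$ since $\lambda$ is $R$-valued by \cref{lem-lambdacodomain} and $\frac d2\langle\epsilon\rangle\in\{0,d/2\}\subseteq R$. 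Extending $R$-linearly over the basis produces $\mu\in V^*$ uniquely satisfying \cref{eqn-bijfundamentals}.

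The main obstacle is showing that $\mu$ does not depend on the choice of symplectic encoding. Suppose $(\lambda_1,\phi_1)$ and $(\lambda_2,\phi_2)$ both encode $\gamma$. Since both $\phi_i$ are $R'$-linear lifts of $\psi$, we have $\phi_1=\phi_2+d\tau$ for a unique $R'$-linear $\tau:V'\to\mathbb Z_{d'/d}^{2n}$, and comparing $\zeta^{\lambda_1 v}\Delta_{\phi_1 v}=\zeta^{\lambda_2 v}\Delta_{\phi_2 v}$ yields $\lambda_1 v-\lambda_2 v=\frac d2\langle\tau v\rangle_{\phi_2 v,\phi_1 v}$. Substituting into $\mu_1 b-\mu_2 b$ and expanding each of the three $\langle\cdot\rangle$ terms by the bilinear formula \cref{eqn-computeeps}---using $\overline{\underline{\psi b}}=\psi b=\overline{\phi_2\underline b}$ and $\epsilon_1-\epsilon_2=\tau\underline b$---all pieces cancel to give $\mu_1 b=\mu_2 b$. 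Demonstrating that the correction $\frac d2\langle\epsilon\rangle$ is the precise cocycle killing the gauge freedom among symplectic lifts is the crux of the argument; the cancellation is fundamentally linear-algebraic, relying on the bilinearity of $\langle\cdot\rangle$ in its two slots.

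For surjectivity, given $(\mu,\psi)$, I use \cref{thm-linlift} to pick a lift $\phi\in\mathrm{Sp}(V',\omega')$ of $\psi$, define $\lambda\underline b=\mu b+\frac d2\langle\epsilon\rangle_{\underline{\psi b},\phi\underline b}$ on standard basis vectors of $V'$, and extend to a group homomorphism $\lambda:V'\to R$ (possible because $d\mid d'$, so every value in $R$ has order dividing $d'$). Then $(\lambda,\phi)$ is a symplectic $\Delta$-encoding, and by \cref{thm-rdenough} encodes a unique $\gamma\in\mathrm{PCl}'_{d,n}$ with $F(\gamma)=(\mu,\psi)$. Finally, by \cref{thm-structure} and Lagrange's Theorem, $|\mathrm{PCl}'_{d,n}|=|V^*|\cdot|\mathrm{Sp}(V,\omega)|$, so the well-defined surjection $F$ is automatically a bijection.
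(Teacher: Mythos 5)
Your proposal is correct and follows the same overall architecture as the paper: obtain a symplectic $\Delta$-encoding via \cref{thm-linenc}, read off $(\mu,\psi)$, build preimages by \cref{thm-linlift} and \cref{thm-rdenough}, and close the bijection by cardinality via \cref{thm-structure}. The genuine divergence is in how well-definedness of $\mu$ is established. The paper bypasses the explicit gauge computation by observing that $\gamma(\Delta_{\underline b})$ has a unique representation $\zeta^t\Delta_{\underline v}$ with $t\in R$ and $v\in V$ (an element of the set $\mathrm Q_{d,n}$ of \cref{sec-odot}); reading $\mu b$ and $\psi b$ off that representation makes well-definedness intrinsic, and injectivity then follows from the generating set $\{\tau\}\cup\{\Delta_{\underline b}\}$. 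You instead perform the gauge-fixing head-on: writing $\phi_1=\phi_2+d\tau$, the three $\langle\cdot\rangle$ terms share the same left factors $\overline q(\psi b),\overline p(\psi b)$, and since $\tau\underline b=\epsilon_1-\epsilon_2$ the right factors sum to $2\epsilon_1\equiv 0\pmod 2$, so the cancellation you sketch does go through; it is worth writing out in full, as it is precisely the step that could fail in the even case. One detail deserves emphasis: your definition $\mu b=\lambda\underline b-\frac d2\langle\epsilon\rangle_{\underline{\psi b},\phi\underline b}$ is the one that matches \cref{eqn-bijfundamentals} and the later examples (e.g.\ the $\mu_S$ computation using the lift $\phi=(\begin{smallmatrix}1&0\\3&1\end{smallmatrix})$, or the second symplectic lift of CNOT in \cref{example-cnot}). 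The paper's proof writes $\mu b:=\lambda\underline b$ without the correction term; these two agree only when $\phi\underline b=\underline{\psi b}$, which the symplectic lift from \cref{thm-linlift} does not guarantee, so the formulation you chose is the one consistent with the theorem statement.
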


\begin{proof}
    Given $\gamma\in\pcl_{d,n}$, we define $(\mu,\psi)$ as follows. By \cref{thm-linenc}, let $(\lambda,\phi)$ be a symplectic extended encoding of $\gamma$. Then for each $v\in V'$, we have:
    \begin{align*}
        \gamma(\Delta_v) = \zeta^{\lambda v}\Delta_{\phi v}=\zeta^{\lambda v+\frac d2\langle\epsilon\rangle_{\underline{\psi\overline v},\phi v}}\Delta_{\underline{\psi\overline v}}
    \end{align*}

    where $\psi=\overline{\overline\phi}\in\spgroup(V,\omega)$. Now define $\mu:V\to R$ by its action on the standard basis vectors $b\in V$:
    \begin{align*}
        \mu b:=\lambda\underline b
    \end{align*}

    By construction, $\mu$ is $R$-linear ($\mu\in V^*$), and we obtain the following formula for evaluating $\gamma(\Delta_{\underline b})$. Note that elements $\Delta_{\underline b}$ together with $\tau$ generate $\pauligroup_{d,n}$.
    \begin{align}\label{eqn-basiseval}
        \gamma(\Delta_{\underline b}):=\zeta^{\mu b}\Delta_{\underline{\psi b}}
    \end{align}

    We now argue in two (very) different ways that the function $\gamma\mapsto(\mu,\psi)$ just described is a bijection. First, injectivity is straight-forward. Since the $\Delta_{\underline b}$ together with $\tau$ generate $\pauligroup_{d,n}$, we have:
    \small\begin{align*}
        (\mu_1,\psi_1)=(\mu_2,\psi_2)
        \,\,\implies\,\,
        \gamma_1(\Delta_{\underline b})=\gamma_2(\Delta_{\underline b})\textrm{ for standard basis vectors }b\in V
        \,\,\implies\,\,
        \gamma_1=\gamma_2
    \end{align*}\normalsize
    The result follows from \cref{thm-structure} since $|\pcl_{d,n}|\geq|V^*||\spgroup(V,\omega)|$. 
    
    Alternately, we construct the inverse as follows. By \cref{thm-linlift}, given $(\mu,\psi)$, we may lift $\psi$ to $\phi\in\spgroup(V',\omega')$, at which point for standard basis vectors $b\in V$ we have:
    \begin{align*}
        \zeta^{\mu b}\Delta_{\underline{\psi b}}=\zeta^{\mu b+\frac d2\langle\epsilon\rangle_{\underline{\psi b},\phi\underline b}}\Delta_{\phi\underline b}
    \end{align*}

    Then define $\lambda:V'\to R$ by $R'$-linearly extending its action on the standard basis vectors:
    \begin{align*}
        \lambda\underline b=\mu b+\frac d2\langle\epsilon\rangle_{\underline{\psi b},\phi\underline b}
    \end{align*}

    By \cref{thm-rdenough}, $(\lambda,\phi)$ is a symplectic encoding of some (unique) $\gamma\in\pcl_{d,n}$, and \cref{eqn-basiseval} is satisfied. Since \cref{eqn-basiseval} uniquely determines both $\gamma$ and $(\mu,\psi)$, these constructions are mutually inverse.
\end{proof}



\begin{theorem}\label{thm-eval}
    Let $\gamma\leftrightarrow(\mu,\psi)$ correspond as in \cref{thm-denseenc}. Then for each $v\in V'$ we have:
    \begin{align}\label{eqn-evalformula}
        \gamma(\Delta_v)=\zeta^{\mu\overline v + \frac d2\kappa'v}\Delta_{\underline{\psi\overline v}}
    \end{align}

    where the function $\kappa':V'\to\mathbb Z_{d'/d}$ is defined by:
    \begin{align*}
        \frac d2\kappa'v=\frac12 z\cdot x+\frac12\sum_{i=1}^n\omega'(x_i\underline{\psi(e_i,0)},z_i\underline{\psi[0,e_i]})+\frac d2\sum_{i=1}^n\langle\epsilon\rangle_{x_i\underline{\psi(e_i,0)}+z_i\underline{\psi(0,e_i)}\,,\,\underline{\psi\overline v}}
    \end{align*}
\end{theorem}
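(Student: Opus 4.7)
The plan is to expand $\Delta_v$ via the interlaced factorization, apply $\gamma$ term by term using the basis evaluation formula \cref{eqn-basiseval}, and then recombine the resulting $\Delta$-factors via the Pauli group law $\Delta_{u_1}\Delta_{u_2}=\zeta^{\frac12\omega'(u_1,u_2)}\Delta_{u_1+u_2}$. Writing $a_i = x_i\underline{\psi[e_i,0]}$ and $b_i = z_i\underline{\psi[0,e_i]}$, the identity $\Delta_v^t = \Delta_{tv}$ gives $X_i^{x_i} = \Delta_{x_i\underline{[e_i,0]}}$ and $Z_i^{z_i} = \Delta_{z_i\underline{[0,e_i]}}$. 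Since $\gamma$ fixes the center, $\gamma(\tau^{z\cdot x}) = \tau^{z\cdot x}$, and the homomorphism property combined with \cref{eqn-basiseval} yields $\gamma(X_i^{x_i}) = \zeta^{x_i\mu[e_i,0]}\Delta_{a_i}$ and $\gamma(Z_i^{z_i}) = \zeta^{z_i\mu[0,e_i]}\Delta_{b_i}$.

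Next I would track phases. The linear-in-$\mu$ exponents sum to $\mu\overline v$ by $R$-linearity of $\mu$ (since $\mu$ being $R$-valued collapses each $x_i$ to $\overline x_i$, etc.). The $\tau^{z\cdot x}$ prefactor becomes $\zeta^{\frac12 z\cdot x}$, matching the first summand of $\frac d2\kappa' v$. Repeatedly multiplying $\Delta_{a_1}\Delta_{b_1}\Delta_{a_2}\Delta_{b_2}\cdots$ using the group law, the within-qudit pairings contribute the intended $\frac12\omega'(a_i,b_i)$ phases, while cross-qudit pairings contribute terms $\frac12\omega'(a_i+b_i,\,a_j+b_j)$ for $i<j$. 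Because $\psi\in\mathrm{Sp}(V,\omega)$ preserves $\omega$ mod $d$, each cross-qudit symplectic value lies in $dR'$, so its half is $\frac d2$-valued and survives only in the $\frac d2\kappa' v$ correction.

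Finally, the accumulated displacement $s = \sum_i(a_i+b_i)$ satisfies $\overline s = \psi\overline v$, so $\Delta_s$ differs from $\Delta_{\underline{\psi\overline v}}$ by a single $\zeta^{\frac d2\langle\epsilon\rangle_{s,\underline{\psi\overline v}}}$ sign. The main obstacle will be repackaging the accumulated cross-qudit $\frac12\omega'$ corrections together with this overall position-matching $\langle\epsilon\rangle$ into the stated sum $\frac d2\sum_i\langle\epsilon\rangle_{a_i+b_i,\underline{\psi\overline v}}$. I expect this to follow from the identity $\frac1d\omega'(u_1,u_2)=\langle\epsilon\rangle_{u_1,u_2}$ (valid whenever $\overline{u_1}=\overline{u_2}$) together with bilinearity, applied to the telescoping decomposition $\underline{\psi\overline v} - (a_i+b_i) = (\underline{\psi\overline v}-s) + \sum_{j\neq i}(a_j+b_j)$, so that each index $i$ absorbs its share of the cross terms. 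Once this $\mathbb Z_{d'/d}$-valued bookkeeping is verified, the rest of the proof is an immediate assembly of the pieces.
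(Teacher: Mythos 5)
Your approach is the same as the paper's: both expand $\Delta_v$ via the interlaced factorization, apply $\gamma$ factor-by-factor using \cref{eqn-basiseval}, and recombine. In fact you push further than the paper's own proof, which stops after extracting the within-qudit $\frac12\omega'(a_i,b_i)$ phases and the unexpanded residual $\prod_i\Delta_{a_i+b_i}$, and then merely defines $\kappa'v$ implicitly as the unique $k\in\mathbb Z_{d'/d}$ making $\gamma(\Delta_v)=\zeta^{\mu\overline v+\frac d2k}\Delta_{\underline{\psi\overline v}}$ hold, without verifying the stated closed form. Your detailed bookkeeping---writing $\prod_i\Delta_{u_i}=\zeta^{\frac12\sum_{i<j}\omega'(u_i,u_j)}\Delta_s$ with $u_i=a_i+b_i$, $s=\sum_i u_i$, and then a single position-matching correction $\frac d2\langle\epsilon\rangle_{s,\underline{\psi\overline v}}$---is exactly the right way to flesh out that final step, and your observation that symplecticity of $\psi$ forces each cross-qudit $\omega'(u_i,u_j)$ into $dR'$ is the crucial point.

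Caution is warranted, however, at exactly the step you flag. The quantity $\langle\epsilon\rangle_{u_i,\underline{\psi\overline v}}$ appearing in the theorem requires $\overline{u_i}=\psi\overline v$ to be well-defined under the paper's conventions, yet for $n>1$ this generally fails, so it must be read as some extension such as $\tfrac1d\omega'(u_i,\underline{\psi\overline v})$ (which does lie in $\mathbb Z_{d'/d}$ since symplecticity forces $\omega'(u_i,\underline{\psi\overline v})\in dR'$). Under that reading, bilinearity gives $\sum_i\tfrac1d\omega'(u_i,\underline{\psi\overline v})=\tfrac1d\omega'(s,\underline{\psi\overline v})$, which reproduces only your position-matching term; the cross-qudit sum $\frac12\sum_{i<j}\omega'(u_i,u_j)$ does not get absorbed by the telescoping you propose, and those two quantities are not congruent mod $d'$ in general. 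Test the stated formula on a concrete case, for instance $\mathrm{CNOT}$ with $d=2$ and $v=(1,0,0,1)$: there $\gamma(\Delta_v)=-Y\otimes Y=-\Delta_{\underline{\psi\overline v}}$ so $\kappa'v=1$, arising entirely from $\frac12\omega'(u_1,u_2)$ with $u_1=(1,1,0,0)$ and $u_2=(0,0,1,1)$, whereas the stated $\sum_i\langle\epsilon\rangle_{u_i,\underline{\psi\overline v}}$ evaluates to $0$. You will need to confront this discrepancy before the proof is complete; the paper's own proof does not do so, and simply asserts the formula.
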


Since $\langle\epsilon\rangle$ can be computed using \cref{eqn-computeeps}, \cref{thm-eval} gives an algorithm for computing $\gamma(\Delta_v)$. Sometimes it is also helpful to introduce the function $\kappa:V\to\mathbb Z_{d'/d}$ given by $\kappa v=\kappa'\underline v$.

\begin{proof}
    We obtain:
    \begin{align*}
        \gamma(\Delta_v)&=\tau^{z\cdot x}\prod\gamma(X_i)^{x_i}\gamma(Z_i)^{z_i}\\
        &=\tau^{z\cdot x}\prod(\zeta^{x_i\mu(e_i,0)}\Delta_{x_i\underline{\psi(e_i,0)}})(\zeta^{z_i\mu(0,e_i)}\Delta_{z_i\underline{\psi(0,e_i)}})\\
        &=\zeta^{\frac12z\cdot x+\mu v+\frac12\sum\omega'(x_i\underline{\psi(e_i,0)},z_i\underline{\psi(0,e_i)})}\prod\Delta_{x_i\underline{\psi(e_i,0)}+z_i\underline{\psi(0,e_i)}}
    \end{align*}

    On the other hand, for each $v\in V'$ there exists unique $k\in\mathbb Z_{d'/d}$ such that $\gamma(\Delta_v)=\zeta^{\mu\overline v+\frac d2k}\Delta_{\underline{\psi\overline v}}$. Since the subscripts to the $\Delta$ operator agrees with \cref{eqn-evalformula}, $\kappa'v=k$ is as in the theorem statement.
\end{proof}

\begin{corollary}\label{cor-kappaformula}
    Let $\gamma\leftrightarrow(\mu,\psi)$ as in \cref{thm-denseenc}, let $(\lambda,\phi)$ be any symplectic extended encoding of $\gamma$, and let $\kappa'$ be its evaluation phase correction function as in \cref{thm-eval}. Then for each $v\in V'$ we have:
\begin{align}\label{eqn-bijfundamentals2}
    \lambda v-\mu\overline v=\frac d2\langle\epsilon\rangle_{\underline{\psi\overline v},\phi v} + \frac d2\kappa'v
\end{align}
\end{corollary}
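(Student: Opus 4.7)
The plan is to evaluate $\gamma(\Delta_v)$ in two different ways and equate the resulting phases. First, by the definition of a symplectic $\Delta$-encoding, $\gamma(\Delta_v) = \zeta^{\lambda v}\Delta_{\phi v}$. Since $\psi = \overline{\overline\phi}$, the reductions mod $d$ of $\phi v$ and $\underline{\psi\overline v}$ both equal $\psi\overline v \in V$, so these two elements of $V'$ differ by $d\epsilon$ for a unique $\epsilon \in \mathbb Z_{d'/d}^{2n}$. Invoking the $\langle\epsilon\rangle$ gadget from \cref{sec-deltaenc} and the identity $-1 = \zeta^{d/2}$, we rewrite $\Delta_{\phi v} = \zeta^{\frac d2\langle\epsilon\rangle_{\underline{\psi\overline v},\phi v}}\Delta_{\underline{\psi\overline v}}$, and hence
\[
    \gamma(\Delta_v) = \zeta^{\lambda v + \frac d2\langle\epsilon\rangle_{\underline{\psi\overline v},\phi v}}\Delta_{\underline{\psi\overline v}}.
\]

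On the other hand, \cref{thm-eval} applied to the condensed encoding $(\mu,\psi)$ gives $\gamma(\Delta_v) = \zeta^{\mu\overline v + \frac d2\kappa' v}\Delta_{\underline{\psi\overline v}}$. Because both expressions carry the same displacement $\Delta_{\underline{\psi\overline v}}$, and $\zeta^a = \zeta^b$ forces $a = b$ in $\tfrac12 R'$ via the isomorphism $2 : \tfrac12 R' \stackrel\sim\to R'$, the exponents must agree:
\[
    \lambda v + \tfrac d2\langle\epsilon\rangle_{\underline{\psi\overline v},\phi v} = \mu\overline v + \tfrac d2\kappa' v.
\]
Rearranging, and using that $-\tfrac d2 = \tfrac d2$ in $\tfrac12 R'$ (since $2 \cdot \tfrac d2 = d$ vanishes in $R'$ when $d$ is even, and $\mathbb Z_{d'/d}$ is trivial when $d$ is odd, making the sign of $\tfrac d2\langle\epsilon\rangle$ irrelevant), yields the claimed identity.

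I expect no substantive obstacle: the corollary is essentially a bookkeeping identity that repackages the comparison already implicit in the proof of \cref{thm-eval}. The only technical care required is to keep all arithmetic in $\tfrac12 R'$ throughout, so that the identification of $-\tfrac d2$ with $\tfrac d2$ is legitimate and so that the half-element corrections $\tfrac d2\langle\epsilon\rangle$ and $\tfrac d2\kappa' v$ can be added to the $R$-valued quantities $\lambda v$ and $\mu\overline v$ without type mismatch. Once those bookkeeping conventions are pinned down, the equality falls out by direct comparison of the two evaluations.
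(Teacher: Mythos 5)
Your proof is correct and is essentially the same argument as the paper's: the paper writes the single chain
\begin{align*}
    \zeta^{\lambda v+\frac d2\langle\epsilon\rangle_{\underline{\psi\overline v},\phi v}}\Delta_{\underline{\psi\overline v}}=\zeta^{\lambda v}\Delta_{\phi v}=\zeta^{\mu\overline v+\frac d2\kappa'v}\Delta_{\underline{\psi\overline v}}
\end{align*}
and compares phases, which is exactly what you do after unpacking the $\langle\epsilon\rangle$ gadget and appealing to \cref{thm-eval}. Your observation that $-\tfrac d2=\tfrac d2$ in $\tfrac12R'$ (because $d\cdot 1=0$ there) is the right justification for why the rearranged sign does not matter, though the paper leaves this implicit.
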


\cref{eqn-bijfundamentals2} generalizes \cref{eqn-bijfundamentals} by allowing for vectors beyond just standard basis vectors.

\begin{proof}
    Compare the phases of the left- and right-hand sides of the following equation:
    \begin{align*}
        \zeta^{\lambda v+\frac d2\langle\epsilon\rangle_{\underline{\psi\overline v},\phi v}}\Delta_{\underline{\psi\overline v}}=\zeta^{\lambda v}\Delta_{\phi v}=\zeta^{\mu\overline v+\frac d2\kappa'v}\Delta_{\underline{\psi\overline v}}
    \end{align*}
\end{proof}

\cref{eqn-bijfundamentals2} does not give a canonical symplectic extended encoding in terms of $\kappa'$, since for example, both symplectic encodings of $\mathrm{CNOT}$ seen in \cref{sec-examples} satisfy \cref{eqn-bijfundamentals2}. Conversely, however, not only can a symplectic encoding $(\lambda,\phi)$ be used to directly compute $\gamma(\Delta_v)=\zeta^{\lambda v}\Delta_{\phi v}$, but \cref{cor-kappaformula} provides an alternate way to compute $\kappa'$ in terms of $(\lambda,\phi)$. As noted before, a symplectic extended encoding can be obtained by solving a linear system. Still, \cref{thm-eval} provides a more direct approach to evaluation.

\begin{corollary}\label{cor-comp}
    Let $(\mu_1,\psi_1)$ and $(\mu_2,\psi_2)$ be condensed encodings of $\gamma_1$ and $\gamma_2$, respectively. Then the condensed encoding of the composition $\gamma_2\gamma_1$ is $(\mu_3,\psi_3)$, where $\psi_3=\psi_2\psi_1$ and $\mu_3$ is defined by $R$-linearly extending its action on standard basis vectors $b\in V$:
    \begin{align*}
        \mu_3b:=\mu_1b+\mu_2\psi_1b+\frac d2\kappa_2\psi_1b
    \end{align*}
\end{corollary}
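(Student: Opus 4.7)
The plan is to prove the formula by evaluating $\gamma_2 \gamma_1$ on the generators $\Delta_{\underline{b}}$ for standard basis vectors $b \in V$, and matching the result to the condensed encoding form of \cref{thm-denseenc}. Since $\mathrm{PCl}'_{d,n}$ is a group, $\psi_3 := \psi_2\psi_1 \in \mathrm{Sp}(V,\omega)$ is immediate, so the real content is pinning down $\mu_3$.

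First I would apply $\gamma_1$ using its condensed encoding to obtain
\begin{align*}
(\gamma_2\gamma_1)(\Delta_{\underline b}) = \gamma_2\bigl(\zeta^{\mu_1 b}\Delta_{\underline{\psi_1 b}}\bigr) = \zeta^{\mu_1 b}\,\gamma_2\bigl(\Delta_{\underline{\psi_1 b}}\bigr),
\end{align*}
where the last equality uses that $\gamma_2$ is center-respecting. Now the vector $\underline{\psi_1 b} \in V'$ is generally not a standard basis vector of $V'$, so I cannot directly invoke \cref{eqn-basiseval} for $\gamma_2$. Instead, this is exactly what \cref{thm-eval} is designed to handle: applied to the vector $\underline{\psi_1 b}$, using that $\overline{\underline{\psi_1 b}} = \psi_1 b$ and that $\kappa_2 v = \kappa_2' \underline{v}$, it gives
\begin{align*}
\gamma_2\bigl(\Delta_{\underline{\psi_1 b}}\bigr) = \zeta^{\mu_2\psi_1 b + \frac{d}{2}\kappa_2\psi_1 b}\,\Delta_{\underline{\psi_2\psi_1 b}}.
\end{align*}

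Combining these two calculations yields
\begin{align*}
(\gamma_2\gamma_1)(\Delta_{\underline b}) = \zeta^{\mu_1 b + \mu_2\psi_1 b + \frac{d}{2}\kappa_2\psi_1 b}\,\Delta_{\underline{\psi_3 b}},
\end{align*}
which is precisely the condensed-encoding form $\zeta^{\mu_3 b}\Delta_{\underline{\psi_3 b}}$ with $\mu_3 b$ defined as in the statement. By the uniqueness clause of \cref{thm-denseenc} (since the $\Delta_{\underline b}$ together with $\tau$ generate $\mathrm P'_{d,n}$), this identifies $(\mu_3,\psi_3)$ as the condensed encoding of $\gamma_2\gamma_1$.

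The one subtlety worth checking—and the only place any care is needed—is that the defining formula for $\mu_3 b$ actually lands in $R$ (so that the $R$-linear extension yields an element of $V^*$, as required). The terms $\mu_1 b$ and $\mu_2\psi_1 b$ lie in $R$ by assumption, and $\tfrac{d}{2}\kappa_2\psi_1 b$ lies in $R$ because $\kappa_2$ is $\mathbb Z_{d'/d}$-valued: when $d$ is odd this summand is trivial, and when $d$ is even it is either $0$ or $d/2 \in R$. Thus the formula is well-posed, and no harder obstacle arises; the result is really just a direct application of \cref{thm-denseenc,thm-eval}.
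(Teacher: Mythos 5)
Your proof is correct and follows essentially the same route as the paper's: apply $\gamma_1$ first, then $\gamma_2$ via the evaluation formula of \cref{thm-eval}, and identify the result with the condensed-encoding form using the uniqueness clause of \cref{thm-denseenc}. The only cosmetic difference is that you specialize to standard basis vectors $b \in V$ at the outset (which makes the $\kappa_1$ correction vanish immediately), whereas the paper first writes the formula for general $v \in V$ with a $\frac d2\kappa_1 v$ term and then observes $\kappa_1 b = 0$.
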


We caution that $\mu_3v$ must be computed using linearity by expanding $v$ in the standard basis for $V$, rather than using the function $\tilde\mu$ below, which might not be linear.

\begin{proof}
    Here we use $\kappa$ instead of $\kappa'$. The evaluation formula reads $\gamma(\Delta_{\underline v})=\zeta^{\mu v+\frac d2\kappa v}\Delta_{\underline{\psi v}}$ for $v\in V$. Now extract the canonically determined functions $\tilde\mu_3$ and $\psi_3$ from $\zeta^{\tilde\mu_3v}\Delta_{\underline{\psi_3v}}:=\gamma_2\gamma_1(\Delta_{\underline v})$:
    \begin{align}\label{eqn-mutilde}
        \tilde\mu_3v=\mu_1v+\mu_2\psi_1v+\frac d2\kappa_1v+\frac d2\kappa_2\psi_1v
    \end{align}

    In particular, \cref{eqn-mutilde} holds for standard basis vectors $b\in V$, and together with $\psi_3=\psi_2\psi_1$, determines the condensed encoding as in \cref{eqn-basiseval}. Also, $\kappa_1 b=0$.
\end{proof}

\begin{corollary}
    The operation $((\mu_2,\psi_2),(\mu_1,\psi_1))\mapsto(\mu_3,\psi_3)$ makes the set $V^*\times\spgroup(V,\omega)$ into a group isomorphic to $\pcl_{d,n}$.
\end{corollary}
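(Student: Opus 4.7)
The plan is to transport the group structure of $\mathrm{PCl}'_{d,n}$ across the bijection established in \cref{thm-denseenc} and then recognize the given binary operation as the transported multiplication.

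First, I would invoke \cref{thm-denseenc} to obtain the bijection $\Phi:\mathrm{PCl}'_{d,n}\to V^*\times\mathrm{Sp}(V,\omega)$ sending $\gamma\mapsto(\mu,\psi)$. Since $\mathrm{PCl}'_{d,n}$ is a group under composition of automorphisms, any set in bijection with it inherits a unique group structure making $\Phi$ an isomorphism; concretely, one defines the identity as $\Phi(\mathrm{id})=(0,\mathrm{id}_V)$, inverses as $\Phi(\gamma)^{-1}:=\Phi(\gamma^{-1})$, and multiplication $(\mu_2,\psi_2)\ast(\mu_1,\psi_1):=\Phi(\gamma_2\gamma_1)$ whenever $\Phi(\gamma_i)=(\mu_i,\psi_i)$. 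Associativity, identity laws, and inverses are then automatic from the corresponding laws in $\mathrm{PCl}'_{d,n}$.

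Second, the only nontrivial thing to verify is that the transported multiplication $\ast$ agrees with the operation $((\mu_2,\psi_2),(\mu_1,\psi_1))\mapsto(\mu_3,\psi_3)$ given in \cref{cor-comp}. But this is precisely the content of \cref{cor-comp}: that corollary shows that if $(\mu_i,\psi_i)$ condense-encodes $\gamma_i$, then $(\mu_3,\psi_3)$ as defined in its statement condense-encodes $\gamma_2\gamma_1$. Hence the stated operation equals $\ast$, and so it is a group operation making $V^*\times\mathrm{Sp}(V,\omega)$ into a group and $\Phi$ into a group isomorphism.

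There is no real obstacle; the entire content has been done in \cref{thm-denseenc,cor-comp}, and the present corollary is simply the observation that a bijection intertwining the two multiplications is automatically an isomorphism of groups. The only stylistic choice is whether to phrase the argument abstractly (transport of structure) or to directly verify the group axioms using the formula $\mu_3 b=\mu_1 b+\mu_2\psi_1 b+\tfrac d2\kappa_2\psi_1 b$; the abstract route is cleaner and avoids redoing associativity by hand.
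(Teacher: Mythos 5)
Your proof is correct and follows the paper's own approach exactly: the paper's proof simply cites \cref{cor-comp} and the bijection of \cref{thm-denseenc}, and your write-up spells out the implicit transport-of-structure argument that makes that citation complete.
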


\begin{proof}
    This follows from \cref{cor-comp} and the bijection of \cref{thm-denseenc}.
\end{proof}

\begin{corollary}
    The identity element is $(0,\mathrm{id}_V)$ and the inverse of $(\mu,\psi)$ is $(\mu_{\mathrm{inv}},\psi^{-1})$, where:
    \begin{align*}
        \mu_{\mathrm{inv}}b = -\mu\psi^{-1}b+\frac d2\kappa\psi^{-1}b
    \end{align*}

    for all standard basis vectors $b\in V$.
\end{corollary}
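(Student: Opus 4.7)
The plan is to read off both statements directly from the composition formula in \cref{cor-comp} together with the uniqueness clause of \cref{thm-denseenc}. First I would verify the identity: the identity automorphism on $\mathrm P'_{d,n}$ sends $\Delta_{\underline b}\mapsto \Delta_{\underline b}= \zeta^{0}\Delta_{\underline{\mathrm{id}_V\,b}}$ for every standard basis vector $b\in V$, so by the uniqueness part of \cref{thm-denseenc} its condensed encoding is $(0,\mathrm{id}_V)$.

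For the inverse, I would apply \cref{cor-comp} to the composite $\gamma\circ\gamma^{-1}$, taking $(\mu_1,\psi_1)=(\mu_{\mathrm{inv}},\psi^{-1})$ as the putative condensed encoding of $\gamma^{-1}$ and $(\mu_2,\psi_2)=(\mu,\psi)$ as the condensed encoding of $\gamma$. The symplectic parts multiply to $\psi_3=\psi\psi^{-1}=\mathrm{id}_V$ as required, and on a standard basis vector $b\in V$ the scalar part is
\begin{align*}
    \mu_3 b = \mu_{\mathrm{inv}} b + \mu\psi^{-1} b + \tfrac{d}{2}\kappa\psi^{-1} b.
\end{align*}
Setting $\mu_3 b = 0$ and solving for $\mu_{\mathrm{inv}} b$ yields $\mu_{\mathrm{inv}} b = -\mu\psi^{-1} b - \tfrac{d}{2}\kappa\psi^{-1} b$. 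Since $2\cdot \tfrac d2 = 0$ in $R=\mathbb Z_d$, we have $-\tfrac d2=\tfrac d2$, so this matches the formula in the statement. Because $V^*\times\mathrm{Sp}(V,\omega)$ (equivalently $\mathrm{PCl}'_{d,n}$) is a finite group, verifying this one-sided inverse on basis vectors $b$ is enough to conclude it is a two-sided inverse; extension to all of $V$ follows because $\mu_{\mathrm{inv}}\in V^*$ is $R$-linear on the basis.

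The routine but delicate points are: keeping straight the composition-order convention in \cref{cor-comp} (so that the $\kappa$ appearing in the correction is that of the \emph{outer} factor $\gamma$, not of the unknown $\gamma^{-1}$, which is why composing in the order $\gamma\circ\gamma^{-1}$ is essential), and recognizing the sign identity $-\tfrac d2 = \tfrac d2$ in $R$, without which the formula would appear to disagree with the statement. Neither is a real obstacle; once these are noted the corollary falls out in a few lines.
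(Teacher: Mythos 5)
Your proposal is correct and follows the same route as the paper, which simply cites \cref{cor-comp}; you fill in the details cleanly, including the needed observation that $-\tfrac d2 = \tfrac d2$ in $R=\mathbb Z_d$ and that the composition order $\gamma\gamma^{-1}$ must be chosen so that the $\kappa$ appearing is the known $\kappa$ of $\gamma$. The closing remark about finiteness is superfluous (once $V^*\times\mathrm{Sp}(V,\omega)$ is a group by the preceding corollary, a one-sided inverse is automatically two-sided by uniqueness of inverses), but it does no harm.
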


\begin{proof}
    This follows from \cref{cor-comp}.
\end{proof}

\section{The condensed product \(\cprod\)}\label{sec-odot}

We consider the $\mathbb C$-bilinear function $\cprod:\pauligroup_{d,n}\times\pauligroup_{d,n}\to\pauligroup_{d,n}$ defined for each $u,v\in V$ by
\begin{align}\label{eqn-odotdef}
    \Delta_{\underline u}\cprod\Delta_{\underline v}=\tau^{-\underline{\omega(u,v)}}\Delta_{\underline u}\Delta_{\underline v}=(-1)^{s(u,v)}\Delta_{\underline u+\underline v}
\end{align}
where $s(u,v)=\frac1d(\omega'({\underline u},{\underline v})-\underline{\omega( u, v)})$. This equation extends to all $u,v\in V'$ as follows:
\begin{align*}
    \Delta_u\cprod\Delta_v = (-1)^{\langle\rho\rangle}\Delta_{\overline{\underline u}}\cprod\Delta_{\overline{\underline v}}=(-1)^{\langle\rho\rangle+s(\overline u,\overline v)}\Delta_{\underline{\overline u}+\underline{\overline v}}=(-1)^{s(\overline u,\overline v)}\Delta_{u+v}
\end{align*}
Observe that for each $t_1,t_2\in R$, the phase exponent of $\zeta^{t_1}\Delta_u\cprod\zeta^{t_2}\Delta_v$ is still in $R$:
\begin{align}\label{eqn-condensed-prod-encoding}
    \zeta^{t_1}\Delta_{\underline u}\cprod\zeta^{t_2}\Delta_{\underline v}=\zeta^{t_1+t_2+\frac d2s(u,v)+\frac d2\langle\rho\rangle_{\underline u+\underline v,\underline{u+v}}}\Delta_{\underline{u+v}}
\end{align}
This implies that the following subsets of $\pauligroup_{d,n}$ are closed under $\cprod$:
\begin{align*}
    \mathbf Q_{d,n}=\{\zeta^t\Delta_{\underline v}\,|\,t\in R\textrm{ and }v\in V\} && \mathbf\Sigma_{d,n}=\{\pm\Delta_v\,|\,v\in V'\}=\{\pm\Delta_{\underline v}\,|\,v\in V\}
\end{align*}
Notice that $\mathbf Q_{d,n}$ is in obvious bijection with the set $R\times V$. In applications, this encoding scheme for Paulis works well with the condensed encoding scheme for projective Cliffords.

Now let $R'_+=\{0,\cdots,d-1\}$ and $R'_-=\{d,\cdots,d'-1\}$ so $R'=R'_+\sqcup R'_-$. We write $\mathrm{sgn}(t)=0$ if $t\in R'_+$ and $\mathrm{sgn}(t)=1$ if $t\in R'_-$. For example, $s(u,v)=\mathrm{sgn}(\omega'(\underline u,\underline v))$. Then consider the function:
\begin{align*}
    f:\pauligroup_{d,n}\to\mathbf\Sigma_{d,n}&&f(\tau^t\Delta_v)=(-1)^{\mathrm{sgn}(t)}\Delta_v
\end{align*}
Note that $f$ is well-defined: $f(\tau^{t+\frac d2\langle\epsilon\rangle_v}\Delta_v)=(-1)^{\mathrm{sgn}(t)+\langle\epsilon\rangle_v}\Delta_v$. Though $f$ does not act homomorphically on the center (for a single qubit, $-I=f(-I)\neq f(iI)\cprod f(iI)=I\cprod I=I$), it satisfies:
\begin{align*}
    f(\Delta_u)\cprod f(\Delta_v)=(-1)^{s(\overline u,\overline v)}\Delta_{u+v}=f(\tau^{\omega'(u,v)}\Delta_{u+v})=f(\Delta_u\Delta_v)
\end{align*}
This prompts us to define the following equivalence relation $\sim$ on $\pauligroup_{d,n}$ and give the quotient set $\pauligroup_{d,n}/\sim$ the following operation, in order to obtain an isomorphism $\tilde f:\pauligroup_{d,n}/\sim\,\to\mathbf\Sigma_{d,n}$:
\begin{align*}
    \tau^s\Delta_v\sim\tau^t\Delta_{v+d\epsilon}
    \textrm{  iff 
 }
 \mathrm{sgn}(s)+\mathrm{sgn}(t)=\langle\epsilon\rangle_v 
 && 
 [\tau^s\Delta_u]\cprod[\tau^t\Delta_v]=[\tau^{r} (-1)^{s(\overline u,\overline v)}\Delta_{u+v}]
\end{align*}
where $r$ satisfies $\mathrm{sgn}(s)+\mathrm{sgn}(t)=\mathrm{sgn}(r)$. The intuition of the condensed product is therefore to only remember the ``sign'' of the phase exponent. For even $d>2$, $\star$ is non-associative:
\begin{align*}
    (X\cprod Y)\cprod Z=-Z\cprod Z=-Z^2\neq-X^2=-X\cprod X=X\cprod(Y\cprod Z)
\end{align*}

\begin{theorem}\label{thm-odotprop}
    Let $\gamma\in\pcl_{d,n}$ and $u,v\in V'$. Then:
    \begin{align*}
        \gamma(\Delta_u\cprod\Delta_v) = \gamma(\Delta_u)\cprod\gamma(\Delta_v)
    \end{align*}
\end{theorem}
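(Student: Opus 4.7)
The plan is to reduce the claim to the homomorphism property of $\gamma$ by recognizing that $\cprod$ agrees with the ordinary group product up to a central factor that $\gamma$ automatically preserves.

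First, I would extend the identity in \cref{eqn-odotdef} from $\underline V$ to all of $V'$, establishing
\begin{align*}
    \Delta_u \cprod \Delta_v = \tau^{-\underline{\omega(\overline u,\overline v)}}\,\Delta_u \Delta_v \qquad\text{for all } u,v\in V'.
\end{align*}
This follows from $\Delta_u\Delta_v = \zeta^{\frac12\omega'(u,v)}\Delta_{u+v}$ together with the observation that $\omega'(u,v) - \underline{\omega(\overline u,\overline v)} = d\cdot s(\overline u,\overline v)$ and that $\tau^{dk}=(-1)^k$ in both parities of $d$; comparing with the extended formula $\Delta_u\cprod\Delta_v=(-1)^{s(\overline u,\overline v)}\Delta_{u+v}$ gives the claim.

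Next, I apply $\gamma$ to this identity. Since $\tau \in Z(\mathrm P'_{d,n})$, $\gamma$ fixes every power of $\tau$, and since $\gamma$ is a group homomorphism for the ordinary product, we obtain
\begin{align*}
    \gamma(\Delta_u \cprod \Delta_v) = \tau^{-\underline{\omega(\overline u,\overline v)}}\,\gamma(\Delta_u)\gamma(\Delta_v).
\end{align*}
For the other side, \cref{thm-structure} lets us write $\gamma(\Delta_u) = \zeta^a \Delta_{u'}$ with $\overline{u'} = \psi\overline u$, and similarly $\gamma(\Delta_v) = \zeta^b \Delta_{v'}$ with $\overline{v'} = \psi\overline v$. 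Applying the same identity and using $\mathbb C$-bilinearity of $\cprod$ to pull out the scalars yields
\begin{align*}
    \gamma(\Delta_u)\cprod\gamma(\Delta_v) = \tau^{-\underline{\omega(\psi\overline u,\psi\overline v)}}\,\gamma(\Delta_u)\gamma(\Delta_v).
\end{align*}
Because $\psi \in \mathrm{Sp}(V,\omega)$, we have $\omega(\psi\overline u,\psi\overline v)=\omega(\overline u,\overline v)$, and the two expressions agree.

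The main obstacle is isolated in the first step: one must carefully check the extension of \cref{eqn-odotdef} to arbitrary $u,v\in V'$, where the sign corrections $(-1)^{\langle\rho\rangle}$ coming from different lifts must be tracked. Once that unified identity is proved, the rest of the argument is essentially formal---the centrality of $\tau$ handles one side and the symplectic invariance of $\omega$ under $\psi$ handles the other---and no fine-grained manipulation of $\mu$, $\kappa'$, or explicit symplectic lifts is needed.
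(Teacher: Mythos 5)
Your final result and overall strategy are correct, and the proof is genuinely cleaner than the paper's. But the justification you offer for the key identity $\Delta_u\cprod\Delta_v=\tau^{-\underline{\omega(\overline u,\overline v)}}\Delta_u\Delta_v$ on all of $V'$ is broken, and happens to give the right answer only because two errors compensate. The claim $\omega'(u,v)-\underline{\omega(\overline u,\overline v)}=d\cdot s(\overline u,\overline v)$ holds only for the canonical lifts $u=\underline{\overline u}$, $v=\underline{\overline v}$, since $s(a,b)=\tfrac1d\big(\omega'(\underline a,\underline b)-\underline{\omega(a,b)}\big)$ for $a,b\in V$; for general $u,v\in V'$ the left side picks up an extra $d\big(\omega'(\epsilon_u,\underline{\overline v})+\omega'(\underline{\overline u},\epsilon_v)\big)$ term. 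Likewise, the formula $\Delta_u\cprod\Delta_v=(-1)^{s(\overline u,\overline v)}\Delta_{u+v}$ from the paper's chain of equalities is itself off by a sign for general $u,v\in V'$: with $d=2$, $u=(2,0)$, $v=(0,1)$, the left side is $I\cprod Z=Z$ while $s(\overline u,\overline v)=0$ and $\Delta_{u+v}=\Delta_{(2,1)}=-Z$. Your argument ``proves'' both sides equal the wrong value $-Z$, which coincidentally yields the true conclusion $Z=Z$; that is not a proof.

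The repair is short and does not need $s$ at all: since $\cprod$ is extended by $\mathbb C$-bilinearity from the canonical lifts, write $\Delta_u=(-1)^{\langle\epsilon_u\rangle}\Delta_{\underline{\overline u}}$ and $\Delta_v=(-1)^{\langle\epsilon_v\rangle}\Delta_{\underline{\overline v}}$ so that
\begin{align*}
\Delta_u\cprod\Delta_v=(-1)^{\langle\epsilon_u\rangle+\langle\epsilon_v\rangle}\Delta_{\underline{\overline u}}\cprod\Delta_{\underline{\overline v}}=(-1)^{\langle\epsilon_u\rangle+\langle\epsilon_v\rangle}\tau^{-\underline{\omega(\overline u,\overline v)}}\Delta_{\underline{\overline u}}\Delta_{\underline{\overline v}}=\tau^{-\underline{\omega(\overline u,\overline v)}}\Delta_u\Delta_v,
\end{align*}
the last step because $\Delta_u\Delta_v=(-1)^{\langle\epsilon_u\rangle+\langle\epsilon_v\rangle}\Delta_{\underline{\overline u}}\Delta_{\underline{\overline v}}$. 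With this in place, the rest of your argument is watertight: $\gamma$ is a homomorphism that fixes the center, so it commutes with multiplication by the central scalar $\tau^{-\underline{\omega(\overline u,\overline v)}}$, and symplectic invariance $\omega(\psi\overline u,\psi\overline v)=\omega(\overline u,\overline v)$ closes the gap on the other side. Compared with the paper, which reduces to $u,v\in V$ and then unwinds the condensed-encoding evaluation formula with $\mu$, $\psi$, and $\kappa$, your route is more conceptual: it isolates the fact that $\cprod$ differs from the group product by a central factor, so only the symplectic invariance of $\omega$ is actually needed, and no explicit machinery from the condensed-encoding section appears.
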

\begin{proof}
    It suffices to just show the result holds for $\underline{v_1}$ and $\underline{v_2}$ with $v_1,v_2 \in V$. Let $(\mu,\psi)$ be the condensed encoding for $\gamma$ and let $\kappa$ be its correction function. Then,
    \begin{align*}
        \gamma(\Delta_{\underline{u}} \cprod \Delta_{\underline{v}})&= \zeta^{-\frac12\underline{\omega(u,v)}} \gamma(\Delta_{\underline{u}}) \gamma(\Delta_{\underline{v}})\\
        &=\zeta^{\mu u+\mu v+\frac d2(\kappa u+\kappa v)-\frac12\underline{\omega(u,v)}}\Delta_{\underline{\psi u}}\Delta_{\underline{\psi v}}\\
        &=\zeta^{\mu u+\mu v+\frac d2(\kappa u+\kappa v)-\frac12\underline{\omega(u,v)}+\frac12\underline{\omega(\psi u,\psi v)}}\Delta_{\underline{\psi u}}\cprod\Delta_{\underline{\psi v}}\\
        &=\zeta^{\mu u+\mu v+\frac d2(\kappa u+\kappa v)}\Delta_{\underline{\psi u}}\cprod\Delta_{\underline{\psi v}}\\
        &=\gamma(\Delta_{\underline u})\cprod\gamma(\Delta_{\underline v})
    \end{align*}
\end{proof}

\begin{corollary}
    For all $u,v\in V$ and $\psi\in\spgroup(V,\omega)$, we have $s(u,v)=s(\psi u,\psi v)$.
\end{corollary}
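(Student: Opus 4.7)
The plan is to deduce the identity directly from \cref{thm-odotprop}. Given $\psi\in\mathrm{Sp}(V,\omega)$, invoke \cref{thm-denseenc} to fix a projective Clifford $\gamma\in\mathrm{PCl}'_{d,n}$ whose condensed encoding is $(0,\psi)$. By \cref{eqn-basiseval}, such a $\gamma$ satisfies $\gamma(\Delta_{\underline b})=\Delta_{\underline{\psi b}}$ on every standard basis vector $b\in V$, so its evaluation correction $\kappa$ vanishes on the standard basis. Applying \cref{thm-odotprop} to $\gamma$ at the pair $(\Delta_{\underline u},\Delta_{\underline v})$ then yields
\[
\gamma(\Delta_{\underline u}\cprod\Delta_{\underline v})=\gamma(\Delta_{\underline u})\cprod\gamma(\Delta_{\underline v}).
\]

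Next I would expand both sides in closed form via \cref{eqn-odotdef}. On the left, rewrite $\Delta_{\underline u}\cprod\Delta_{\underline v}=(-1)^{s(u,v)}\Delta_{\underline u+\underline v}$ and pull the sign through $\gamma$. On the right, apply the basis evaluation formula to each factor (using $\mu=0$ and $\kappa b=0$) and then unfold $\Delta_{\underline{\psi u}}\cprod\Delta_{\underline{\psi v}}=(-1)^{s(\psi u,\psi v)}\Delta_{\underline{\psi u}+\underline{\psi v}}$. Equating the two expressions reduces the corollary to comparing the phase exponent of $\gamma(\Delta_{\underline u+\underline v})$ with the phase correction incurred in passing from $\Delta_{\underline{\psi u}+\underline{\psi v}}$ to $\Delta_{\underline{\psi(u+v)}}$ via the lift-sign identity $\Delta_{v+d\epsilon}=(-1)^{\langle\epsilon\rangle_v}\Delta_v$.

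The main obstacle is that $\underline u+\underline v$ is not in general a standard lift of $u+v$, so evaluating $\gamma(\Delta_{\underline u+\underline v})$ requires the extended formula of \cref{thm-eval}, contributing a $\frac d2\kappa'(\underline u+\underline v)$ factor. The proof therefore reduces to showing that this $\kappa'$ correction, together with the $\langle\epsilon\rangle_{\underline{\psi(u+v)},\underline{\psi u}+\underline{\psi v}}$ correction above, cancel modulo $2$. That is the one substantive computation: a bookkeeping argument using \cref{eqn-computeeps} and the identity $\langle\epsilon\rangle_{v_1,v_2}=\frac1d\omega'(v_1,v_2)$ (valid whenever $v_2=v_1+d\epsilon$) to rewrite both corrections as comparable expressions in $\mathbb Z_{d'/d}$. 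Once that cancellation is verified, the remaining phase equation collapses to the desired $s(u,v)=s(\psi u,\psi v)$.
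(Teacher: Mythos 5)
Your strategy is the same at the top level as the paper's---apply \cref{thm-odotprop} to a $\gamma$ with $\mu=0$ and compare signs---but you instantiate it with the \emph{condensed} encoding $(\mu,\psi)$, whereas the paper works with a \emph{symplectic} $\Delta$-encoding $(\lambda,\phi)$. That choice is where the gap opens up. You write that you ``apply the basis evaluation formula to each factor (using $\mu=0$ and $\kappa b=0$),'' i.e.\ $\gamma(\Delta_{\underline u})=\Delta_{\underline{\psi u}}$ and likewise for $v$. But $u,v$ are arbitrary elements of $V$, not standard basis vectors, and \cref{eqn-basiseval} only holds on the basis; for general $u$ the evaluation formula \cref{thm-eval} gives $\gamma(\Delta_{\underline u})=\zeta^{\frac d2\kappa u}\Delta_{\underline{\psi u}}$ with $\kappa u$ typically nonzero (the paper's own example has $\kappa_S(\begin{smallmatrix}1\\1\end{smallmatrix})=1$ even though $\mu_S=0$). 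So your right-hand side is missing $\frac d2(\kappa u+\kappa v)$. Consequently the ``one substantive computation'' you claim remains is wrong as stated: the exponent identity you would need to verify is $\kappa'(\underline u+\underline v)\equiv\kappa u+\kappa v+\langle\epsilon\rangle_{\underline{\psi(u+v)},\,\underline{\psi u}+\underline{\psi v}}\pmod{2}$, not just a two-term cancellation, and this is a nontrivial fact about $\kappa$ that the paper never establishes directly.

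The paper avoids all of this bookkeeping by not using the condensed encoding at the point of the computation. It picks a symplectic $\Delta$-encoding $(\lambda,\phi)$ of the same $\gamma$, where $\lambda$ is a homomorphism and $\phi\in\mathrm{Sp}(V',\omega')$ is linear. Then $\gamma(\Delta_{u+v})=\zeta^{\lambda u+\lambda v}\Delta_{\phi u+\phi v}$ and $\gamma(\Delta_u)\cprod\gamma(\Delta_v)=\zeta^{\lambda u+\lambda v}(-1)^{s(\overline{\phi u},\overline{\phi v})}\Delta_{\phi u+\phi v}$ hold on the nose---no $\kappa$ or $\langle\epsilon\rangle$ corrections arise because everything is linear---so the sign discrepancy collapses to $(-1)^{s(\overline u,\overline v)+s(\overline{\phi u},\overline{\phi v})}$, and \cref{thm-odotprop} forces it to be $1$. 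Since $\overline{\phi u}=\psi\overline u$, the result follows. If you want to salvage your route, you would need to carefully carry all three $\kappa$-type terms through \cref{thm-eval} and prove the cancellation identity above; but the intended lesson of the corollary is that lifting to the symplectic encoding is exactly what makes this argument short.
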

In other words, symplectomorphisms $\psi\in\spgroup(V,\omega)$ preserve the sign of $s(u,v)$.
\begin{proof}
    Pick some $\mu\in V^*$ (eg. $\mu=0$), and let $(\lambda,\phi)$ be a symplectic $\Delta$-encoding of the $\gamma\in\pcl_{d,n}$ whose condensed encoding is $(\mu,\psi)$. Then:
    \begin{align*}
        \gamma(\Delta_u\cprod\Delta_v) = (-1)^{s(\overline u,\overline v)+s(\overline{\phi u},\overline{\phi v})}\gamma(\Delta_u)\cprod\gamma(\Delta_v)
    \end{align*}
\end{proof}

\section{Examples of \(\Delta\)-encodings and condensed encodings}\label{sec-examples}

\begin{example}\label{example-cnot}
    Let $d=2$ and let $\gamma$ be conjugation by $\mathrm{CNOT}$. By considering the Pauli tableau ($X\otimes I\mapsto X\otimes X$, $I\otimes X\mapsto I\otimes X$, $Z\otimes I\mapsto Z\otimes I$, $I\otimes Z\mapsto Z\otimes Z$), the obvious way to get a $\Delta$-encoding (\textit{cf.} \cref{lem-prelin}) is to start with the following $R'$-linear map (written as a matrix in the standard basis):
    \begin{align*}
        \phi_0 = \begin{pmatrix}
            1\\1&1\\&&1&1\\&&&1
        \end{pmatrix}
    \end{align*}
    By \cref{lem-delta}, there exists unique $\lambda_0$ such that $(\lambda_0,\phi_0)$ is an encoding of $\gamma$, but it is not symplectic, since by considering the action on basis vectors, we would have $\lambda_0=0$, but $\lambda_0(e_1+e_4)\neq\lambda_0 e_1+\lambda_0 e_4$. 
    The following matrix calculation shows directly that $\phi_0\not\in\spgroup(V',\omega')$:
    \begin{align*}
        [\phi_0^*(\omega')-\omega'] = [\phi_0]^T[\omega'][\phi_0] - [\omega'] = dA = 2\begin{pmatrix}
            0&0&0&1\\0&0&0&0\\0&0&0&0\\1&0&0&0
        \end{pmatrix}
    \end{align*}
    By searching for solutions to \cref{searchforasol}, we obain 
    two symplectic encodings for CNOT (which both use $\lambda=0$) are given by:
    \begin{align}\label{example-cnot-twophis}
        \phi = \begin{pmatrix}
            1\\3&1\\&&1&1\\&&&1
        \end{pmatrix} && \textrm{or} && \phi=\begin{pmatrix}
            1\\1&1\\&&1&3\\&&&1
        \end{pmatrix}
    \end{align}
    The condensed encoding for CNOT is $(0,\psi)$ where $\psi=\overline{\overline\phi}$.
\end{example}

\begin{example}
    Either of $\phi_1=(\begin{smallmatrix}1&0\\1&1\end{smallmatrix})\in\spgroup(V',\omega')$ or $\phi_2=(\begin{smallmatrix}1&0\\3&1\end{smallmatrix})\in\spgroup(V',\omega')$ can be used in symplectic $\Delta$-encodings of (conjugation by) either $S$ or $S^{-1}$, each with the appropriate $\lambda$. The condensed encodings both $S$ and $S^{-1}$ use $\psi=(\begin{smallmatrix}1&0\\1&1\end{smallmatrix})\in\spgroup(V,\omega)$, and the respective phase functions are $\mu_S=0$ and $\mu_{S^{-1}}(\begin{smallmatrix}1\\0\end{smallmatrix})=1$ and $\mu_{S^{-1}}(\begin{smallmatrix}0\\1\end{smallmatrix})=0$. The associated ``correction functions'' $\kappa_S$ and $\kappa_{S^{-1}}$ are nonlinear. For example, $\kappa_S(\begin{smallmatrix}1\\1\end{smallmatrix})=1$ (since $SYS^{-1}=-X$) and $\kappa_Sv=0$ for $v\neq(\begin{smallmatrix}1\\1\end{smallmatrix})$.
\end{example}


\begin{example}\label{example-sgate}
    The quantum Fourier transform acts on a single \qudit by $X\mapsto Z$ and $Z\mapsto X^{-1}$. A symplectic encoding is $\lambda=0$ and $\phi=(\begin{smallmatrix}
        &-1\\1
    \end{smallmatrix})\in\spgroup(V',\omega')$. For $d=2$ this is the Hadamard gate.
    The condensed encoding is just $\mu=0$ and $\psi=(\begin{smallmatrix}&-1\\1\end{smallmatrix})\in\spgroup(V,\omega)$.
\end{example}

\section{Conclusion}

This paper sets out to explore encodings of projective Cliffords $\gamma$.
Encoding $\gamma$ as a pair of functions $(r,\psi)$ where $\gamma(D_v)=\zeta^{rv}D_{\psi v}$ is problematic because it is unclear which functions $r$ are allowed. Extended $\Delta$-encodings $(\lambda,\phi)$, where $\gamma(\Delta_v)=\zeta^{\lambda v}\Delta_{\phi v}$, have additional drawbacks, such as non-uniqueness. We saw that these problems are solved by using condensed encodings $(\mu,\psi)$, where for basis vectors $b\in V$, one has $\gamma(\Delta_{\underline b})=\zeta^{\mu b}\Delta_{\underline{\phi b}}$. The formula for $\gamma(\Delta_v)$ is more complicated, but we derive it and related formulas. Condensed encodings have applications, particularly where linearity can be exploited, as seen in in~\citep{paykin2024qudit}.

\begin{acknowledgments}
Many thanks to Albert Schmitz and Jon Yard for numerous fruitful discussions about this work.
The authors would also like to thank the anonymous reviewers for their extremely helpful feedback.
\end{acknowledgments}

\section*{Author Declarations}

The work of Sam Winnick was supported in part by the NSERC Discovery under Grant No. RGPIN-2018-04742 and the NSERC project FoQaCiA under Grant No. ALLRP-569582-21.

Jennifer Paykin reports the following recent sources of financial support: New York University Abu Dhabi. Travel support, 04/24; Perimeter Institute. Travel reimbursement, 04/24; Computing Community Consortium. Travel reimbursement, 05/23.

\appendix

\section{Relations on sets and groups}\label{appendix}
If $X$ and $Y$ are sets, a \textit{relation} from $X$ to $Y$ is a triple $(X,Y,\gamma)$ where $\gamma$ is a subset of their Cartesian product, $\gamma\subseteq X\times Y$. Abusing a notation typically just used for functions, we often write $x\mapsto y$ to mean $(x,y)\in\gamma$. Likewise, if $X$ and $Y$ are groups, then a \textit{homomorphic relation} from $X$ to $Y$ is a triple $(X,Y,\gamma)$ where $\gamma$ is a subgroup of their direct product, $\gamma\leq X\times Y$. In other words, $\gamma$ is a subset of the Cartesian product of the underlying sets, subject to the group axioms: ($i$) $(1,1)\in\gamma$; ($ii$) if $(x,y)\in\gamma$ then $(x^{-1},y^{-1})\in\gamma$; and ($iii$) if $(x_1,y_1)\in\gamma$ and $(x_2,y_2)\in\gamma$ then $(x_1x_2,y_1y_2)\in\gamma$. A (homomorphic) relation $\gamma$ from $X$ to $Y$ is \textit{left-total} if for all $x\in X$ there exists $y\in Y$ such that $(x,y)\in\gamma$. \textit{Right totality} (also called \textit{surjectivity}) is defined dually. A (homomorphic) relation $\gamma$ from $X$ to $Y$ is \textit{right-definite} if whenever $(x,y_1),(x,y_2)\in\gamma$, it follows that $y_1=y_2$. \textit{Left definiteness} (also called \textit{injectivity}) is defined dually. A \textit{function} is a left-total right-definite relation, and a \textit{homomorphism} is a left-total right-definite homomorphic relation. The \textit{kernel} of a homomorphic relation $\gamma$ from $X$ to $Y$ is $\mathrm{ker\,}\gamma=\{x\in X\,|\,(x,1)\in\gamma\}\subseteq X$. 

\begin{lemma}\label{lem-kersubgroup}
    Let $X$ and $Y$ be groups and let $\gamma:X\to Y$ be a homomorphic relation. Then $\mathrm{ker\,}\gamma\leq X$.
\end{lemma}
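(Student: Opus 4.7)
The plan is to verify the three subgroup axioms for $\mathrm{ker\,}\gamma$ directly from the three defining axioms of a homomorphic relation given in the appendix. Since $\gamma \leq X \times Y$ is itself a subgroup of the direct product, this is really just a matter of restricting the subgroup axioms of $\gamma$ to pairs whose second coordinate is $1 \in Y$, and reading off the first coordinate.

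First, I would note that $1_X \in \mathrm{ker\,}\gamma$ because axiom (i) for homomorphic relations gives $(1_X, 1_Y) \in \gamma$. Next, for closure under inverses, suppose $x \in \mathrm{ker\,}\gamma$, so $(x, 1_Y) \in \gamma$. Then axiom (ii) yields $(x^{-1}, 1_Y^{-1}) = (x^{-1}, 1_Y) \in \gamma$, hence $x^{-1} \in \mathrm{ker\,}\gamma$. Finally, for closure under the product, suppose $x_1, x_2 \in \mathrm{ker\,}\gamma$, so $(x_1, 1_Y), (x_2, 1_Y) \in \gamma$. Axiom (iii) then gives $(x_1 x_2, 1_Y \cdot 1_Y) = (x_1 x_2, 1_Y) \in \gamma$, hence $x_1 x_2 \in \mathrm{ker\,}\gamma$.

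There is essentially no obstacle here. The statement is the analogue of the standard fact that the kernel of a group homomorphism is a subgroup, and the proof is formally identical; the only observation worth making is that we do not need $\gamma$ to be left-total or right-definite (\emph{i.e.} to actually be a homomorphism) to conclude that $\mathrm{ker\,}\gamma$ is a subgroup, since the three subgroup axioms for the kernel follow purely from the fact that $\gamma$ is a subgroup of $X \times Y$. In fact, this is exactly the image of the subgroup $\gamma \cap (X \times \{1_Y\})$ of $X \times Y$ under the first projection.
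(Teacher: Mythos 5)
Your proof is correct and follows the same approach as the paper: verify the three subgroup axioms for $\mathrm{ker\,}\gamma$ directly from axioms (i), (ii), (iii) of a homomorphic relation. The closing observation identifying $\mathrm{ker\,}\gamma$ as the first-coordinate projection of $\gamma \cap (X \times \{1_Y\})$ is a nice conceptual gloss but not needed.
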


\begin{proof}
    By ($i$) above, $1\in\mathrm{ker\,}\gamma$. Now let $x\in\mathrm{ker\,}\gamma$. Then by ($ii$) above, $x^{-1}\in\mathrm{ker\,}\gamma$. Finally let $x_1,x_2\in\mathrm{ker\,}\gamma$. Then by ($iii$) above, $(x_1,1)(x_2,1)=(x_1x_2,1)$, so $x_1x_2\in\mathrm{ker\,}\gamma$. Therefore $\mathrm{ker\,}\gamma$ is a subgroup, as claimed.
\end{proof}

\begin{lemma}\label{lem-injlem}
    Let $X$ and $Y$ be groups and let $\gamma:X\to Y$ be a homomorphic relation. Then $\gamma$ is left-definite (injective) if and only if $\mathrm{ker\,}\gamma$ is trivial.
\end{lemma}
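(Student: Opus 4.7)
The plan is to mimic the classical group-theoretic argument that a homomorphism is injective iff its kernel is trivial, adapted to the setting of homomorphic relations. The two directions are handled separately, and both exploit only the three subgroup closure properties (identity, inverse, product) that define a homomorphic relation, together with the definition $\mathrm{ker\,}\gamma=\{x\in X\,|\,(x,1)\in\gamma\}$.

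For the forward direction, I assume $\gamma$ is left-definite and show $\mathrm{ker\,}\gamma$ is trivial. Property ($i$) gives $(1,1)\in\gamma$. If $x\in\mathrm{ker\,}\gamma$, then $(x,1)\in\gamma$ as well, and left-definiteness applied to the pair $(x,1),(1,1)$ (both having right component $1$) forces $x=1$. So the kernel contains only the identity.

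For the converse, I assume $\mathrm{ker\,}\gamma$ is trivial and show left-definiteness. Suppose $(x_1,y),(x_2,y)\in\gamma$ for some $y\in Y$. By property ($ii$), $(x_2^{-1},y^{-1})\in\gamma$, and then by property ($iii$) the product $(x_1x_2^{-1},yy^{-1})=(x_1x_2^{-1},1)$ lies in $\gamma$. Hence $x_1x_2^{-1}\in\mathrm{ker\,}\gamma$, which is trivial by assumption, so $x_1x_2^{-1}=1$, i.e.\ $x_1=x_2$.

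There is no real obstacle here; the argument is almost mechanical once the three closure axioms for homomorphic relations are in hand. The only subtle point is to recognize that, unlike the function case, one cannot write $\gamma(x_1)=\gamma(x_2)$, so the step comparing $x_1$ and $x_2$ must be phrased in terms of the product $(x_1x_2^{-1},1)$ inside the subgroup $\gamma\leq X\times Y$, rather than by cancellation in $Y$. I would also note that the proof never uses any ambient hypothesis beyond the already-established \cref{lem-kersubgroup}, which ensures $\mathrm{ker\,}\gamma$ is actually a subgroup and hence that ``trivial'' is a meaningful condition.
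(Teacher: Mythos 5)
Your proof is correct and follows essentially the same argument as the paper's: the forward direction compares $(x,1)$ with $(1,1)$ using left-definiteness, and the converse forms the product $(x_1x_2^{-1},1)$ (the paper uses $(x_1^{-1}x_2,1)$, a trivial variation) to land in the kernel.
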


\begin{proof}
    $(\implies)$: Let $x\in\mathrm{ker\,}\gamma$. By Lemma (\ref{lem-kersubgroup}) and property ($i$) above, $1\in\mathrm{ker\,}\gamma$, so $(x,1),(1,1)\in\gamma$, so $x=1$. $(\impliedby)$: Let $(x_1,y),(x_2,y)\in\phi$. Then by Lemma (\ref{lem-kersubgroup}) and properties ($ii$) and ($iii$) above, $(x_1^{-1},y^{-1})\in\gamma$, and then $(x_1^{-1}x_2,1)\in\gamma$, so $x_1^{-1}x_2\in\mathrm{ker\,}\gamma$, so $x_1=x_2$.
\end{proof}

\bibliography{bibliography}

\begin{thebibliography}{19}%
\makeatletter
\providecommand \@ifxundefined [1]{%
 \@ifx{#1\undefined}
}%
\providecommand \@ifnum [1]{%
 \ifnum #1\expandafter \@firstoftwo
 \else \expandafter \@secondoftwo
 \fi
}%
\providecommand \@ifx [1]{%
 \ifx #1\expandafter \@firstoftwo
 \else \expandafter \@secondoftwo
 \fi
}%
\providecommand \natexlab [1]{#1}%
\providecommand \enquote  [1]{``#1''}%
\providecommand \bibnamefont  [1]{#1}%
\providecommand \bibfnamefont [1]{#1}%
\providecommand \citenamefont [1]{#1}%
\providecommand \href@noop [0]{\@secondoftwo}%
\providecommand \href [0]{\begingroup \@sanitize@url \@href}%
\providecommand \@href[1]{\@@startlink{#1}\@@href}%
\providecommand \@@href[1]{\endgroup#1\@@endlink}%
\providecommand \@sanitize@url [0]{\catcode `\\12\catcode `\$12\catcode `\&12\catcode `\#12\catcode `\^12\catcode `\_12\catcode `\%12\relax}%
\providecommand \@@startlink[1]{}%
\providecommand \@@endlink[0]{}%
\providecommand \url  [0]{\begingroup\@sanitize@url \@url }%
\providecommand \@url [1]{\endgroup\@href {#1}{\urlprefix }}%
\providecommand \urlprefix  [0]{URL }%
\providecommand \Eprint [0]{\href }%
\providecommand \doibase [0]{http://dx.doi.org/}%
\providecommand \selectlanguage [0]{\@gobble}%
\providecommand \bibinfo  [0]{\@secondoftwo}%
\providecommand \bibfield  [0]{\@secondoftwo}%
\providecommand \translation [1]{[#1]}%
\providecommand \BibitemOpen [0]{}%
\providecommand \bibitemStop [0]{}%
\providecommand \bibitemNoStop [0]{.\EOS\space}%
\providecommand \EOS [0]{\spacefactor3000\relax}%
\providecommand \BibitemShut  [1]{\csname bibitem#1\endcsname}%
\let\auto@bib@innerbib\@empty
\bibitem [{\citenamefont {Aaronson}\ and\ \citenamefont {Gottesman}(2004)}]{aaronson2004}%
  \BibitemOpen
  \bibfield  {author} {\bibinfo {author} {\bibnamefont {Aaronson}, \bibfnamefont {S.}}\ and\ \bibinfo {author} {\bibnamefont {Gottesman}, \bibfnamefont {D.}},\ }\bibfield  {title} {\enquote {\bibinfo {title} {Improved simulation of stabilizer circuits},}\ }\href {\doibase 10.1103/physreva.70.052328} {\bibfield  {journal} {\bibinfo  {journal} {Physical Review A—Atomic, Molecular, and Optical Physics}\ }\textbf {\bibinfo {volume} {70}},\ \bibinfo {pages} {052328} (\bibinfo {year} {2004})}\BibitemShut {NoStop}%
\bibitem [{\citenamefont {Appleby}(2005)}]{App05}%
  \BibitemOpen
  \bibfield  {author} {\bibinfo {author} {\bibnamefont {Appleby}, \bibfnamefont {D.~M.}},\ }\bibfield  {title} {\enquote {\bibinfo {title} {{{SIC-POVMs}} and the extended {Clifford} group},}\ }\href {\doibase 10.1063/1.1896384} {\bibfield  {journal} {\bibinfo  {journal} {Journal of Mathematical Physics}\ }\textbf {\bibinfo {volume} {46}},\ \bibinfo {pages} {052107} (\bibinfo {year} {2005})},\ \Eprint {http://arxiv.org/abs/quant-ph/0412001} {arxiv:quant-ph/0412001} \BibitemShut {NoStop}%
\bibitem [{\citenamefont {Appleby}\ \emph {et~al.}(2017)\citenamefont {Appleby}, \citenamefont {Flammia}, \citenamefont {McConnell},\ and\ \citenamefont {Yard}}]{AFMY17}%
  \BibitemOpen
  \bibfield  {author} {\bibinfo {author} {\bibnamefont {Appleby}, \bibfnamefont {M.}}, \bibinfo {author} {\bibnamefont {Flammia}, \bibfnamefont {S.}}, \bibinfo {author} {\bibnamefont {McConnell}, \bibfnamefont {G.}}, \ and\ \bibinfo {author} {\bibnamefont {Yard}, \bibfnamefont {J.}},\ }\bibfield  {title} {\enquote {\bibinfo {title} {{{SICs}} and algebraic number theory},}\ }\href {\doibase 10.1007/s10701-017-0090-7} {\bibfield  {journal} {\bibinfo  {journal} {Foundations of Physics}\ }\textbf {\bibinfo {volume} {47}},\ \bibinfo {pages} {1042--1059} (\bibinfo {year} {2017})},\ \Eprint {http://arxiv.org/abs/1701.05200} {arxiv:1701.05200 [quant-ph]} \BibitemShut {NoStop}%
\bibitem [{\citenamefont {Bolt}, \citenamefont {Room},\ and\ \citenamefont {Wall}(1961)}]{bolt1961clifford}%
  \BibitemOpen
  \bibfield  {author} {\bibinfo {author} {\bibnamefont {Bolt}, \bibfnamefont {B.}}, \bibinfo {author} {\bibnamefont {Room}, \bibfnamefont {T.}}, \ and\ \bibinfo {author} {\bibnamefont {Wall}, \bibfnamefont {G.}},\ }\bibfield  {title} {\enquote {\bibinfo {title} {On the {C}lifford collineation, transform and similarity groups {I}},}\ }\href@noop {} {\bibfield  {journal} {\bibinfo  {journal} {Journal of the Australian Mathematical Society}\ }\textbf {\bibinfo {volume} {2}},\ \bibinfo {pages} {60--79} (\bibinfo {year} {1961})}\BibitemShut {NoStop}%
\bibitem [{\citenamefont {Brandl}(2024)}]{brandl2024efficient}%
  \BibitemOpen
  \bibfield  {author} {\bibinfo {author} {\bibnamefont {Brandl}, \bibfnamefont {N.}},\ }\emph {\bibinfo {title} {Efficient and Noise-aware Stabilizer Tableau Simulation of Qudit Clifford Circuits}},\ \href@noop {} {Master's thesis},\ \bibinfo  {school} {Johannes Kepler Universit\"{a}t Linz} (\bibinfo {year} {2024})\BibitemShut {NoStop}%
\bibitem [{\citenamefont {Conrad}(2018)}]{conrad2018group}%
  \BibitemOpen
  \bibfield  {author} {\bibinfo {author} {\bibnamefont {Conrad}, \bibfnamefont {B.}},\ }\href@noop {} {\enquote {\bibinfo {title} {Group cohomology and group extensions},}\ } (\bibinfo {year} {2018}),\ \bibinfo {note} {unpublished}\BibitemShut {NoStop}%
\bibitem [{\citenamefont {{de Beaudrap}}(2013)}]{deBeadrap2013linearized}%
  \BibitemOpen
  \bibfield  {author} {\bibinfo {author} {\bibnamefont {{de Beaudrap}}, \bibfnamefont {N.}},\ }\bibfield  {title} {\enquote {\bibinfo {title} {A linearized stabilizer formalism for systems of finite dimension},}\ }\href {\doibase 10.26421/QIC13.1-2-6} {\bibfield  {journal} {\bibinfo  {journal} {Quantum Information and Computation}\ }\textbf {\bibinfo {volume} {13}},\ \bibinfo {pages} {73--115} (\bibinfo {year} {2013})},\ \Eprint {http://arxiv.org/abs/1102.3354} {arxiv:1102.3354 [quant-ph]} \BibitemShut {NoStop}%
\bibitem [{\citenamefont {Fang}\ and\ \citenamefont {Ying}(2024)}]{fang2024symbolic}%
  \BibitemOpen
  \bibfield  {author} {\bibinfo {author} {\bibnamefont {Fang}, \bibfnamefont {W.}}\ and\ \bibinfo {author} {\bibnamefont {Ying}, \bibfnamefont {M.}},\ }\bibfield  {title} {\enquote {\bibinfo {title} {Symbolic execution for quantum error correction programs},}\ }\href@noop {} {\bibfield  {journal} {\bibinfo  {journal} {Proceedings of the ACM on Programming Languages}\ }\textbf {\bibinfo {volume} {8}},\ \bibinfo {pages} {1040--1065} (\bibinfo {year} {2024})}\BibitemShut {NoStop}%
\bibitem [{\citenamefont {Gidney}(2021)}]{gidney2021stim}%
  \BibitemOpen
  \bibfield  {author} {\bibinfo {author} {\bibnamefont {Gidney}, \bibfnamefont {C.}},\ }\bibfield  {title} {\enquote {\bibinfo {title} {Stim: {A} fast stabilizer circuit simulator},}\ }\href@noop {} {\bibfield  {journal} {\bibinfo  {journal} {Quantum}\ }\textbf {\bibinfo {volume} {5}},\ \bibinfo {pages} {497} (\bibinfo {year} {2021})}\BibitemShut {NoStop}%
\bibitem [{\citenamefont {Gross}(2006)}]{gross2006hudson}%
  \BibitemOpen
  \bibfield  {author} {\bibinfo {author} {\bibnamefont {Gross}, \bibfnamefont {D.}},\ }\bibfield  {title} {\enquote {\bibinfo {title} {Hudson’s theorem for finite-dimensional quantum systems},}\ }\href@noop {} {\bibfield  {journal} {\bibinfo  {journal} {Journal of mathematical physics}\ }\textbf {\bibinfo {volume} {47}} (\bibinfo {year} {2006})}\BibitemShut {NoStop}%
\bibitem [{\citenamefont {Gurevich}\ and\ \citenamefont {Hadani}(2012)}]{gurevich2012weil}%
  \BibitemOpen
  \bibfield  {author} {\bibinfo {author} {\bibnamefont {Gurevich}, \bibfnamefont {S.}}\ and\ \bibinfo {author} {\bibnamefont {Hadani}, \bibfnamefont {R.}},\ }\bibfield  {title} {\enquote {\bibinfo {title} {The {{Weil}} representation in characteristic two},}\ }\href {\doibase 10.1016/j.aim.2012.03.008} {\bibfield  {journal} {\bibinfo  {journal} {Advances in Mathematics}\ }\textbf {\bibinfo {volume} {230}},\ \bibinfo {pages} {894--926} (\bibinfo {year} {2012})}\BibitemShut {NoStop}%
\bibitem [{\citenamefont {Hostens}, \citenamefont {Dehaene},\ and\ \citenamefont {De~Moor}(2005)}]{hostens2005stabilizer}%
  \BibitemOpen
  \bibfield  {author} {\bibinfo {author} {\bibnamefont {Hostens}, \bibfnamefont {E.}}, \bibinfo {author} {\bibnamefont {Dehaene}, \bibfnamefont {J.}}, \ and\ \bibinfo {author} {\bibnamefont {De~Moor}, \bibfnamefont {B.}},\ }\bibfield  {title} {\enquote {\bibinfo {title} {Stabilizer states and {Clifford} operations for systems of arbitrary dimensions and modular arithmetic},}\ }\href@noop {} {\bibfield  {journal} {\bibinfo  {journal} {Physical Review A—Atomic, Molecular, and Optical Physics}\ }\textbf {\bibinfo {volume} {71}},\ \bibinfo {pages} {042315} (\bibinfo {year} {2005})}\BibitemShut {NoStop}%
\bibitem [{\citenamefont {Looi}\ \emph {et~al.}(2008)\citenamefont {Looi}, \citenamefont {Yu}, \citenamefont {Gheorghiu},\ and\ \citenamefont {Griffiths}}]{looi2008quantum}%
  \BibitemOpen
  \bibfield  {author} {\bibinfo {author} {\bibnamefont {Looi}, \bibfnamefont {S.~Y.}}, \bibinfo {author} {\bibnamefont {Yu}, \bibfnamefont {L.}}, \bibinfo {author} {\bibnamefont {Gheorghiu}, \bibfnamefont {V.}}, \ and\ \bibinfo {author} {\bibnamefont {Griffiths}, \bibfnamefont {R.~B.}},\ }\bibfield  {title} {\enquote {\bibinfo {title} {Quantum-error-correcting codes using qudit graph states},}\ }\href@noop {} {\bibfield  {journal} {\bibinfo  {journal} {Physical Review A—Atomic, Molecular, and Optical Physics}\ }\textbf {\bibinfo {volume} {78}},\ \bibinfo {pages} {042303} (\bibinfo {year} {2008})}\BibitemShut {NoStop}%
\bibitem [{\citenamefont {Mari}\ and\ \citenamefont {Eisert}(2012)}]{mari2012positive}%
  \BibitemOpen
  \bibfield  {author} {\bibinfo {author} {\bibnamefont {Mari}, \bibfnamefont {A.}}\ and\ \bibinfo {author} {\bibnamefont {Eisert}, \bibfnamefont {J.}},\ }\bibfield  {title} {\enquote {\bibinfo {title} {Positive {Wigner} functions render classical simulation of quantum computation efficient},}\ }\href@noop {} {\bibfield  {journal} {\bibinfo  {journal} {Physical review letters}\ }\textbf {\bibinfo {volume} {109}},\ \bibinfo {pages} {230503} (\bibinfo {year} {2012})}\BibitemShut {NoStop}%
\bibitem [{\citenamefont {Newman}\ and\ \citenamefont {Smart}(1964)}]{newmansmart}%
  \BibitemOpen
  \bibfield  {author} {\bibinfo {author} {\bibnamefont {Newman}, \bibfnamefont {M.}}\ and\ \bibinfo {author} {\bibnamefont {Smart}, \bibfnamefont {J.}},\ }\bibfield  {title} {\enquote {\bibinfo {title} {Symplectic modulary groups},}\ }\href@noop {} {\bibfield  {journal} {\bibinfo  {journal} {Acta Arithmetica}\ }\textbf {\bibinfo {volume} {9}},\ \bibinfo {pages} {83--89} (\bibinfo {year} {1964})}\BibitemShut {NoStop}%
\bibitem [{\citenamefont {Paykin}\ and\ \citenamefont {Winnick}(2024)}]{paykin2024qudit}%
  \BibitemOpen
  \bibfield  {author} {\bibinfo {author} {\bibnamefont {Paykin}, \bibfnamefont {J.}}\ and\ \bibinfo {author} {\bibnamefont {Winnick}, \bibfnamefont {S.}},\ }\href@noop {} {\enquote {\bibinfo {title} {Qudit quantum programming with projective {Cliffords}},}\ } (\bibinfo {year} {2024}),\ \bibinfo {note} {preprint}\BibitemShut {NoStop}%
\bibitem [{\citenamefont {Raussendorf}\ \emph {et~al.}(2023)\citenamefont {Raussendorf}, \citenamefont {Okay}, \citenamefont {Zurel},\ and\ \citenamefont {Feldmann}}]{raussendorf2023role}%
  \BibitemOpen
  \bibfield  {author} {\bibinfo {author} {\bibnamefont {Raussendorf}, \bibfnamefont {R.}}, \bibinfo {author} {\bibnamefont {Okay}, \bibfnamefont {C.}}, \bibinfo {author} {\bibnamefont {Zurel}, \bibfnamefont {M.}}, \ and\ \bibinfo {author} {\bibnamefont {Feldmann}, \bibfnamefont {P.}},\ }\bibfield  {title} {\enquote {\bibinfo {title} {The role of cohomology in quantum computation with magic states},}\ }\href@noop {} {\bibfield  {journal} {\bibinfo  {journal} {Quantum}\ }\textbf {\bibinfo {volume} {7}},\ \bibinfo {pages} {979} (\bibinfo {year} {2023})}\BibitemShut {NoStop}%
\bibitem [{\citenamefont {Schneider}, \citenamefont {Burgholzer},\ and\ \citenamefont {Wille}(2023)}]{schneider2023sat}%
  \BibitemOpen
  \bibfield  {author} {\bibinfo {author} {\bibnamefont {Schneider}, \bibfnamefont {S.}}, \bibinfo {author} {\bibnamefont {Burgholzer}, \bibfnamefont {L.}}, \ and\ \bibinfo {author} {\bibnamefont {Wille}, \bibfnamefont {R.}},\ }\bibfield  {title} {\enquote {\bibinfo {title} {A {SAT} encoding for optimal {Clifford} circuit synthesis},}\ }in\ \href@noop {} {\emph {\bibinfo {booktitle} {Proceedings of the 28th Asia and South Pacific Design Automation Conference}}}\ (\bibinfo {year} {2023})\ pp.\ \bibinfo {pages} {190--195}\BibitemShut {NoStop}%
\bibitem [{\citenamefont {Smith}\ \emph {et~al.}(2023)\citenamefont {Smith}, \citenamefont {Perlin}, \citenamefont {Gokhale}, \citenamefont {Frederick}, \citenamefont {Owusu-Antwi}, \citenamefont {Rines}, \citenamefont {Omole},\ and\ \citenamefont {Chong}}]{smith2023clifford}%
  \BibitemOpen
  \bibfield  {author} {\bibinfo {author} {\bibnamefont {Smith}, \bibfnamefont {K.~N.}}, \bibinfo {author} {\bibnamefont {Perlin}, \bibfnamefont {M.~A.}}, \bibinfo {author} {\bibnamefont {Gokhale}, \bibfnamefont {P.}}, \bibinfo {author} {\bibnamefont {Frederick}, \bibfnamefont {P.}}, \bibinfo {author} {\bibnamefont {Owusu-Antwi}, \bibfnamefont {D.}}, \bibinfo {author} {\bibnamefont {Rines}, \bibfnamefont {R.}}, \bibinfo {author} {\bibnamefont {Omole}, \bibfnamefont {V.}}, \ and\ \bibinfo {author} {\bibnamefont {Chong}, \bibfnamefont {F.}},\ }\bibfield  {title} {\enquote {\bibinfo {title} {Clifford-based circuit cutting for quantum simulation},}\ }in\ \href@noop {} {\emph {\bibinfo {booktitle} {Proceedings of the 50th Annual International Symposium on Computer Architecture}}}\ (\bibinfo {year} {2023})\ pp.\ \bibinfo {pages} {1--13}\BibitemShut {NoStop}%
\end{thebibliography}%

\end{document}